\renewcommand\footnotetextcopyrightpermission[1]{} 
\def\fullversion
\def\shortversion
\g@addto@macro{\UrlBreaks}{\UrlOrds}
\definecolor{shadecolor}{RGB}{222,222,200}
\newenvironment{btHighlight}[1][]
{\begingroup\tikzset{bt@Highlight@par/.style={#1}}\begin{lrbox}{\@tempboxa}}
{\end{lrbox}\bt@HL@box[bt@Highlight@par]{\@tempboxa}\endgroup}
\newcommand\btHL[1][]{%
  \begin{btHighlight}[#1]\bgroup\aftergroup\bt@HL@endenv%
}
\def\bt@HL@endenv{%
  \end{btHighlight}%
  \egroup
}
\newcommand{\bt@HL@box}[2][]{%
  \tikz[#1]{%
    \pgfpathrectangle{\pgfpoint{1pt}{0pt}}{\pgfpoint{\wd #2}{\ht #2}}%
    \pgfusepath{use as bounding box}%
    \node[anchor=base west ,outer sep=0pt,inner xsep=1pt, inner ysep=0pt, rounded corners=3pt, minimum height=\ht\strutbox+1pt,#1]{\raisebox{1pt}{\strut}\strut\usebox{#2}};
  }%
}
\lstdefinestyle{SQL}{
    language={SQL},basicstyle=\ttfamily, 
    moredelim=**[is][\btHL]{`}{`},
    moredelim=**[is][{\btHL[]}]{@}{@},
}
\def\@highlightttpeeknext{\futurelet\@nexttoken\@highlightttaux}
\def\@highlighttt #1.{%
    \def\@highlightttaux{\ifx\@nexttoken\egroup
       \myhighlightmethod {#1}\else
       \myhighlightmethod {#1.}\linebreak[2]%
       \expandafter\@highlighttt\fi}%
    \@highlightttpeeknext}
\def\@plaintt {\futurelet\@nexttoken\@plainttaux}
\def\@plainttaux {\ifx\@nexttoken\egroup\else
                  \ifx\@nexttoken\bgroup
                  \expandafter\expandafter\expandafter\@plaintta\else
                  \expandafter\expandafter\expandafter\@plainttb\fi\fi}
\def\@plaintta #1{{#1}\@plaintt}
\def\@plainttb #1{\ifcat\@nexttoken a\penalty\hyphenpenalty \plaintthook
  #1\else \plaintthook{#1}\linebreak[2]\fi\@plaintt}
\setlist{nolistsep}
\DeclareSIUnit[scientific-notation=engineering,prefixes-as-symbols=false]{\ns}{\nano\second}
\newcommand{\NAME}{{Cheetah}\xspace}
\newcommand{\parab}[1]{\noindent\textbf{#1}}
\newcommand{\parabe}[1]{\noindent\textbf{\em #1}}
\newcommand{\cut}[1]{}
\newcommand{\sigmodrev}[1]{{#1}}
\newcommand{\turnOffComments}{}
\newcommand{\ran}[1]{}
\newcommand{\minlan}[1]{}
\newcommand{\tirmazi}[1]{}
\newcommand{\ran}[1]{\textcolor{purple}{(Ran: #1)}}
\newcommand{\minlan}[1]{\textcolor{blue}{(minlan: #1)}}
\newcommand{\tirmazi}[1]{\textcolor{violet}{(tirmazi: #1)}}
\newcommand{\skyline}{{\sc SKYLINE}\xspace}
\newcommand{\topn}{{\sc TOP N}\xspace}
\newcommand{\groupby}{{\sc GROUP BY}\xspace}
\newcommand{\join}{{\sc JOIN}\xspace}
\newcommand{\having}{{\sc HAVING}\xspace}
\newcommand{\distinct}{{\sc DISTINCT}\xspace}
\newcounter{algsubstate}
\newcommand{\name}{\ensuremath{\mathit{name}}\xspace}
\newcommand{\seller}{\ensuremath{\mathit{seller}}\xspace}
\newcommand{\price}{\ensuremath{\mathit{price}}\xspace}
\newcommand{\taste}{\ensuremath{\mathit{taste}}\xspace}
\newcommand{\texture}{\ensuremath{\mathit{texture}}\xspace}
\newcommand{\products}{\ensuremath{\mathit{Products}}\xspace}
\newcommand{\ratings}{\ensuremath{\mathit{Ratings}}\xspace}
\newcommand{\inputRowTerm}{entry}
\newcommand{\inputRowTerms}{entries}
\newcommand{\inputColTerm}{column}
\newcommand{\inputColTerms}{columns}
\newcommand{\matrixRowTerm}{row}
\newcommand{\matrixRowTerms}{rows}
\newcommand{\matrixColTerms}{columns}
\newcommand{\inputRow}{\inputRowTerm\xspace}
\newcommand{\inputRows}{\inputRowTerms{}\xspace}
\newcommand{\outputRows}{output \inputRowTerms{}\xspace}
\newcommand{\inputCol}{input \inputColTerm\xspace}
\newcommand{\inputCols}{input \inputColTerms\xspace}
\newcommand{\matrixRow}{\matrixRowTerm\xspace}
\newcommand{\matrixRows}{\matrixRowTerms\xspace}
\newcommand{\matrixCols}{matrix \matrixColTerms\xspace}
\newcounter{insightlabel}
\newcounter{insightnmbr}
\renewcommand{\theinsightlabel}{\textbf{\theinsightnmbr}}
\newtheorem{theorem}{Theorem}
\newtheorem{fact}{Fact}
\newtheorem{lemma}{Lemma}
\begin{document}

\date{}
\author{Muhammad Tirmazi, Ran Ben Basat, Jiaqi Gao, Minlan Yu}
\affiliation{Harvard University}
\email{{tirmazi,ran,jiaqigao,minlanyu}@g.harvard.edu}

\title{Cheetah: Accelerating Database Queries \\ with Switch Pruning
} 
\makeatletter
\patchcmd{\@maketitle}
  {\addvspace{0.5\baselineskip}\egroup}
  {\addvspace{-2\baselineskip}\egroup}
  {}
  {}
\makeatother

\newcommand{\red}[1]{\textcolor{red}{#1}}

\newcommand{\lastIDX}{\ensuremath{k\cdot n + m}}
\newcommand{\newInputLetter}{\ensuremath{{\rho}}}
\newcommand{\newInput}{\ensuremath{\bar{\newInputLetter}}}
\newcommand{\UBnItemsPerBlock}{\max\set{\floor{\mu},1}}
\newcommand{\UBnBlocks}{\floor{n/\nu}}

\newcommand{\bSize}{\ceil{\mu}}
\newcommand{\bContentSize}{\max\set{\floor{\mu^{-1}},1}}
\newcommand{\nBlocks}{\floor{n/\bSize}}
\newcommand{\lnrLBVal}{\nBlocks \log\big({\bContentSize + 1}\big)}
\newcommand{\lnrLB}{\lnrLBVal = \floor{\frac{n}{\ceil{\lnrErrSymbol / \ell}}} \log\big({\max\set{\floor{\ell / \lnrErrSymbol},1} + 1}\big)}
\newcommand{\lnrLBSymbol}{\mathcal B_{\ell,n,\lnrErrSymbol}}
\newcommand{\attention}{\$\$\$}
\newcommand{\set}[1]{\left\{#1\right\}}
\newcommand{\angles}[1]{\left\langle#1\right\rangle}
\newcommand{\ecor}{$\epsilon$-correct}
\newcommand{\minDelta}{1}
\newcommand{\qsr}[1][\minDelta]{\ensuremath{(\ell,n,#1)}-\emph{Sliding Ranker}}
\newcommand{\lnrErrSymbol}{\Delta}
\newcommand{\sensitivity}{\ensuremath{\widetilde{\lnrErrSymbol}}}
\newcommand{\lnr}[1][\lnrErrSymbol]{\ensuremath{(\ell,n,#1)}-\emph{Ranker}}
\newcommand{\approxSet}{\ensuremath{(1,n,\lnrErrSymbol)}-\emph{Ranker}}
\newcommand{\ssr}{\emph{$\ceil{\frac{W}{S}}$-Sliding Ranker}}
\newcommand{\last}{\ensuremath{\mathit{last}}}
\newcommand{\total}{\ensuremath{\mathit{total}}}
\newcommand{\subTotal}{\ensuremath{\mathit{subTotal}}}
\newcommand{\poly}{\mbox{poly}}
\newcommand{\ceil}[1]{ \left\lceil{#1}\right\rceil}
\newcommand{\floor}[1]{ \left\lfloor{#1}\right\rfloor}
\newcommand{\bfloor}[1]{ \big\lfloor{#1}\big\rfloor}
\newcommand{\Bfloor}[1]{ \Big\lfloor{#1}\Big\rfloor}
\newcommand{\parentheses}[1]{ \left({#1}\right)}
\newcommand{\lnp}[1]{ \ln\parentheses{#1}}
\newcommand{\bParentheses}[1]{ \big({#1}\big)}
\newcommand{\BParentheses}[1]{ \Big({#1}\Big)}
\newcommand{\biggParentheses}[1]{ \bigg({#1}\bigg)}
\newcommand{\abs}[1]{ \left|{#1}\right|}
\newcommand{\logp}[1]{\log\parentheses{#1}}
\newcommand{\Omegap}[1]{\Omega\parentheses{#1}}
\newcommand{\Op}[1]{O\parentheses{#1}}
\newcommand{\Thetap}[1]{\Theta\parentheses{#1}}
\newcommand{\omegap}[1]{\omega\parentheses{#1}}
\newcommand{\logc}[1]{\log\ceil{#1}}
\newcommand{\logf}[1]{\log\floor{#1}}
\newcommand{\flogp}[1]{\floor{\logp{#1}}}
\newcommand{\clogp}[1]{\ceil{\logp{#1}}}
\newcommand{\clogc}[1]{\ceil{\logc{#1}}}
\newcommand{\cdotpa}[1]{\cdot\parentheses{#1}}
\newcommand{\inc}[1]{$#1 = #1 + 1$}
\newcommand{\oneOverE}{ \eps^{-1} }
\newcommand{\oneOverT}{ \tau^{-1} }
\newcommand{\range}[2][0]{#1,1,\ldots,#2}
\newcommand{\orange}[1]{\set{1,2,\ldots,#1}}
\newcommand{\frange}[1]{\set{\range{#1}}}
\newcommand{\xrange}[1]{\frange{#1-1}}
\newcommand{\smallMultError}{(1+o(1))}
\newcommand{\sFactor}{(1+o(1))}
\newcommand{\sNegFactor}{(1-o(1))}
\newcommand{\brackets}[1]{\left[#1\right]}
\newcommand{\lowerbound}{\max \set{\log W ,\frac{1}{2\epsilon+W^{-1}}}}
\newcommand{\smallEpsLowerbound}{\window\logp{\frac{1}{\weps}}}
\newcommand{\smallEpsMemoryTheta}{$\Theta\parentheses{\smallEpsMemoryConsumption}$}
\newcommand{\smallEpsMemoryConsumption}{W\cdot\logp{\frac{1}{\weps}}}

\newcommand{\nBits}{\mathfrak b}
\newcommand{\ranker}{\ensuremath{\mathfrak R}}
\newcommand{\suff}{$(W,\eps)$-Suffix Summer}
\newcommand{\numBits}{\ensuremath{\mathit{numElems}}}
\newcommand{\setBits}{\ensuremath{\mathit{totalSum}}}
\newcommand{\lastBit}{\ensuremath{\mathit{oldest_\newInputLetter}}}
\newcommand{\outBits}{\ensuremath{\mathit{out}}}
\newcommand{\outBitsVal}{\ensuremath{\parentheses{\nu - \parentheses{(i-o)\mod \nu}}}}

\newcommand{\largeEpsRestriction}{For any \largeEps{},}
\newcommand{\largeEps}{\ensuremath{\eps^{-1} \le 2W\left(1-\frac{1}{\logw}\right)}}
\newcommand{\smallEpsRestriction}{For any \smallEps{},}
\newcommand{\smallEps}{$\eps^{-1}>2W\left(1-\frac{1}{\logw}\right)=2\window(1-o(1))$}
\newcommand{\bc}{{\sc Basic-Counting}}
\newcommand{\bs}{{\sc Basic-Summing}}
\newcommand{\windowcounting}{ {\sc $(W,\epsilon)$-Window-Counting}}

\newcommand{\query}[1][] { {\sc Query}$(#1)$}
\newcommand{\add}  [1][] { {\sc Add}$(#1)$}

\newcommand{\window}{n}
\newcommand{\logw}{\log \window}
\newcommand{\logrw}[1][]{\log^{#1}{\bsrange\window}}
\newcommand{\flogw}{\floor{\log \window}}
\newcommand{\weps}{\window\epsilon}
\newcommand{\logweps}{\log{\weps}}
\newcommand{\bitarray}{b}
\newcommand{\currentBlockIndex}{i}
\newcommand{\currentBlock}{\bitarray_{\currentBlockIndex}}
\newcommand{\remainder}{y}
\newcommand{\numBlocks}{k}
\newcommand{\sumOfBits}{B}
\newcommand{\blockSize}{S}
\newcommand{\iblockSize}{\frac{\numBlocks}{\window}}
\newcommand{\threshold}{\blockSize}
\newcommand{\halfBlock}{\frac{\window}{2\numBlocks}}
\newcommand{\blockOffset}{o}
\newcommand{\inputVariable}{x}

\newcommand{\bcTableColumnWidth}{1.5cm}
\newcommand{\bsTableColumnWidth}{1.7cm}
\newcommand{\bsExtendedTableColumnWidth}{3cm}
\newcommand{\bcExtendedTableColumnWidth}{2.8cm}
\newcommand{\bcNarrowTableColumnWidth}{1.5cm}
\newcommand{\bsNarrowTableColumnWidth}{1.5cm}
\newcommand{\bsWorstCaseTableColumnWidth}{2cm}

\newcommand{\bsrange}{ \ell }
\newcommand{\bsReminderPercisionParameter}{ \gamma }
\newcommand{\bsest}{ \widehat{\bssum}}
\newcommand{\bssum}{ S^W }
\newcommand{\bsFracInput}{ \inputVariable' }
\newcommand{\bserror}{ \bsrange\window\epsilon }
\newcommand{\bsfractionbits}{ \frac{\bsReminderPercisionParameter}{\epsilon} }
\newcommand{\bsReminderFractionBits}{ \upsilon}
\newcommand{\bsAnalysisTarget}{ \bssum}
\newcommand{\bsAnalysisEstimator}{ \widehat{\bsAnalysisTarget}}
\newcommand{\bsAnalysisError}{ \bsAnalysisEstimator - \bsAnalysisTarget}
\newcommand{\bsRoundingError}{ \xi}

\setlength{\parskip}{0em}
\newcommand{\paraspace}{\vspace{0.1em}}

\fancyhead[]{}

\vspace{-12mm}

\begin{abstract}
Modern database systems are growing increasingly distributed and struggle to reduce query completion time with a large volume of data. 
In this paper, we leverage programmable switches in the network to partially offload query computation to the switch. While switches provide high performance, they have resource and programming constraints that make implementing diverse queries difficult. To fit in these constraints, we introduce the concept of data \emph{pruning} -- filtering out entries that are guaranteed not to affect output. The database system then runs the same query but on the pruned data, which significantly reduces processing time. We propose pruning algorithms for a variety of queries. 
We implement our system, Cheetah, on a Barefoot Tofino switch and Spark. Our evaluation on multiple workloads shows $40 - 200\%$ improvement in the query completion time compared to Spark.

\end{abstract}
\maketitle
\section{Introduction}

Database systems serve as the foundation for many applications such as data warehousing, data analytics, and business intelligence~\cite{thusoo2010data}. 
Facebook is reported to run more than 30,000 database queries that scan over a petabyte per day ~\cite{prestointeractfacebook}.
With the increase of workloads, the challenge for database systems today is providing high performance for queries on a large distributed set of data.

A popular database query processing system is Spark SQL~\cite{sparksql}. Spark SQL optimizes query completion time by assigning \emph{tasks} to workers (each working on one data partition) and aggregating the query result at a master worker. Spark maximizes the parallelism and minimizes the data movement for query processing. For example, each worker sends a stream of the resulting metadata (e.g., just the columns relevant to the query) before sending the entire rows that are requested by the master. Despite the optimizations, the query performance is still limited by software speed.

We propose \NAME, a query processing system that partially offloads queries to programmable switches. Programmable switches are supported by major switch vendors~\cite{tofino,tofinov2,Trident,XPliant}. They allow programmatic processing of multiple Tbps of traffic~\cite{tofino}, which is orders of magnitude higher throughput than software servers and alternative hardware such as FPGAs and GPUs. Moreover, switches already sit between the workers and thus can process aggregated \mbox{data across partitions.}

However, we cannot simply offload all database operations to switches as they have a constrained programming model~\cite{RMT}: Switches process incoming packets in a pipeline of stages. At each stage, there is a limited amount of memory and computation. Further, there is a limited number of bits we can transfer across stages. 
These constraints are at odds with the large amount of data, diverse query functions, and many intermediate states in database systems.

To meet these constraints, we propose a new abstraction called \emph{pruning}. Instead of offloading full functionality to programmable switches, we use the switch to prune a large portion of data based on the query, and the master only needs to process the remaining data in the same way as it does without the switch.
For example, the switch may remove some duplicates in a \distinct query, and let the master remove the rest, thus accelerating the query.

The pruning abstraction allows us to design algorithms that can fit with the constrained programming model at switches: First, we do not need to implement all database functions on the switches but only offload those that fit in the switch's programming model. Second, to fit in the limited memory of switches, we either store a cached set of results or summarized intermediate results while ensuring a high pruning rate. Third, to reduce the number of comparisons, we use in-switch partitioning of the data such that each \inputRow{} is only compared with a small number of \inputRows{} in its partition. We also use projection techniques that map high-dimensional data points into scalars which allows an efficient comparison.

Based on the pruning abstraction, we design and develop multiple query algorithms ranging from filtering and \distinct to more complex operations such as \join or \groupby. Our solutions are rigorously analyzed and we prove bounds on the resulting pruning rates. We build a prototype on the Barefoot Tofino programmable switch~\cite{tofino} and demonstrate $40 - 200\%$ reduction of query completion times compared with Spark SQL. 



\section{Using programmable switches}\label{sec:sys_overview}

\NAME leverages programmable switches to reduce the amount of data transferred to the query server and thus improves query performance. Programmable switches that follow the PISA model consist of multiple pipelines through which a network packet passes sequentially. These pipelines contain stages with disjoint memory which can do a limited set of operations as the packet passes through them. See~\cite{PRECISION} for more information about the limitations. In this section, we discuss the benefits and constraints of programmable switches to motivate our {\em pruning abstraction}.

\subsection{Benefits of programmable switches}
We use Spark SQL as an example of a database query execution engine. Spark SQL is widely used in industry~\cite{sparksql}, adopts common optimizations such as columnar memory-optimized storage and vectorized processing, and has comparable performance to Amazon's RedShift ~\cite{amplabbenchmarkruns} \and Google
s BigQuery~\cite{bigqueryredshiftbenchmark}.

When a user submits a query, Spark SQL uses a Catalyst optimizer that generates the query plan and operation code (in the form of tasks) to run on a cluster of workers. The worker fetches a partition from data sources, runs its assigned task, and passes the result on to the next batch of workers. At the final stage, the \emph{master} worker aggregates the results and returns them to the user. Spark SQL optimizes query completion time by having workers process data in their partition as much as possible and thus minimizes the data transferred to the next set of workers. As a result, the major portion of query completion time is spent at the tasks the workers run. Thus, Spark SQL query performance is often bottlenecked by the server processing speed and not the network~\cite{sparksql}.

\NAME rides on the trend of significant growth of the network capacity (e.g., up to 100Gbps or 400Gbps per port) and the advent of programmable switches, which are now provided by major switch vendors (e.g., Barefoot~\cite{tofino,tofinov2}, Broadcom~\cite{Trident}, and  Cavium Xpliant~\cite{XPliant}). These switches can process billions of packets per second, already exist in the data path, and thus introduce no latency or additional cost. Table~\ref{tab:tofino} compares the throughput and delay of Spark SQL on commodity servers with those of programmable Tofino switches. The best throughput of servers is 10-100Gbps, but switches can reach 6.5-12.8 Tbps. The switch throughput is also orders of magnitudes better than alternative hardware choices such as FPGAs, GPUs, and smart NICs. 
Switches also have less than 1 $\mu$s delay per packet.

These switches already exist in the cluster and already see the data transferred between the workers. They are at a natural place to help process queries. By offloading part of the processing to switches, we can reduce the workload at workers and thus significantly reduce the query completion time despite more data transferred in the network. Compared with server-side or storage-side acceleration, switches have the extra benefit in that it can process the aggregated data across workers. We defer the detailed comparison of \NAME with alternative hardware solutions to~\cref{sec:related}.

\subsection{Constraints of programmable switches}
Programmable switches make it possible to offload part of queries because they parse custom packet formats and thus can understand data block formats. Switches can also store a limited amount of 
state in their match-action tables and make comparisons across data blocks to see if they match a given query. However, there are several challenges in implementing queries on switches:

\parab{Function constraints:} There are limited operations we can run on switches (e.g. hashing, bit shifting, bit matching, etc). These are insufficient for queries which sometimes require string operations, and other arithmetic operations (e.g., multiplication, division, log) on numbers that are \mbox{not power-of-twos.}

\parab{Limited pipeline stages and ALUs:} 
Programmable switches use a pipeline of match action tables. The pipeline has a limited number of stages (e.g. 12-60) and a limited number of ALUs per stage. This means we can only do a limited number of computations at each stage (e.g. no more than ten comparisons in one stage for some switches). This is not enough for some queries which require many comparisons across entries (e.g. \distinct) or across many dimensions (e.g. \skyline).

\parab{Memory and bit constraints:} To reach high throughput and low latency, switches have a limited amount of on-chip memory (e.g. under 100MB of SRAM and up to 100K-300K TCAM entries) that is partitioned between stages. However, if we use switches to run queries, we have to store, compare, and group a large number of past entries that can easily go beyond the memory limit. Moreover, switches only parse a limited number of bits and transfer these bits across stages (e.g., 10-20 Bytes. Some queries may have more bits for the keys especially when queries are on multiple dimensions or long strings.

\section{\NAME design}\label{sec:design}
\label{sec:pruning}
\parab{The pruning abstraction:}
We introduce the {\em pruning} abstraction for {\em partially} offloading queries onto switches. 
Instead of offloading the complete query, we simply offload a critical part of a query. In this way, we best leverage the high throughput and low latency performance of switches while staying within their function and resource constraints. With pruning, the switch simply filters the data sent from the workers, but does not guarantee query completion. The master runs queries on the pruned dataset
and generates the same query result as if it had run the query with the original dataset. 
Formally, we define pruning as follows; Let $Q(D)$ denote the result (output) of query $Q$ when applied to an input data $D$. A pruning algorithm for $Q$, $A_Q$, gets $D$ produces $A_Q(D)\subseteq D$ such that $Q(A_Q(D))=Q(D)$.
That is, the algorithm computes a subset of the data such that the output of running $Q$ on the subset is equivalent to that of applying the query to the whole of $D$.

To make our description easier, for the rest of the paper, we focus on queries with one stage of workers and one master worker. We also assume a single switch between them. An example is a rack-scale query framework where all the workers are located in one rack and the top-of-rack switch runs \NAME pruning solutions. Our solutions can work with multiple stages of workers by having the switch prune data for each stage. We \mbox{discuss how to handle multiple switches in~\cref{sec:extensions}.}

\parab{\NAME architecture:}
\NAME can be easily integrated within Spark without affecting its normal workflow. Figure~\ref{fig:prototype} shows the \NAME design:

\begin{figure}
    \centering
    \includegraphics[width=1\linewidth]{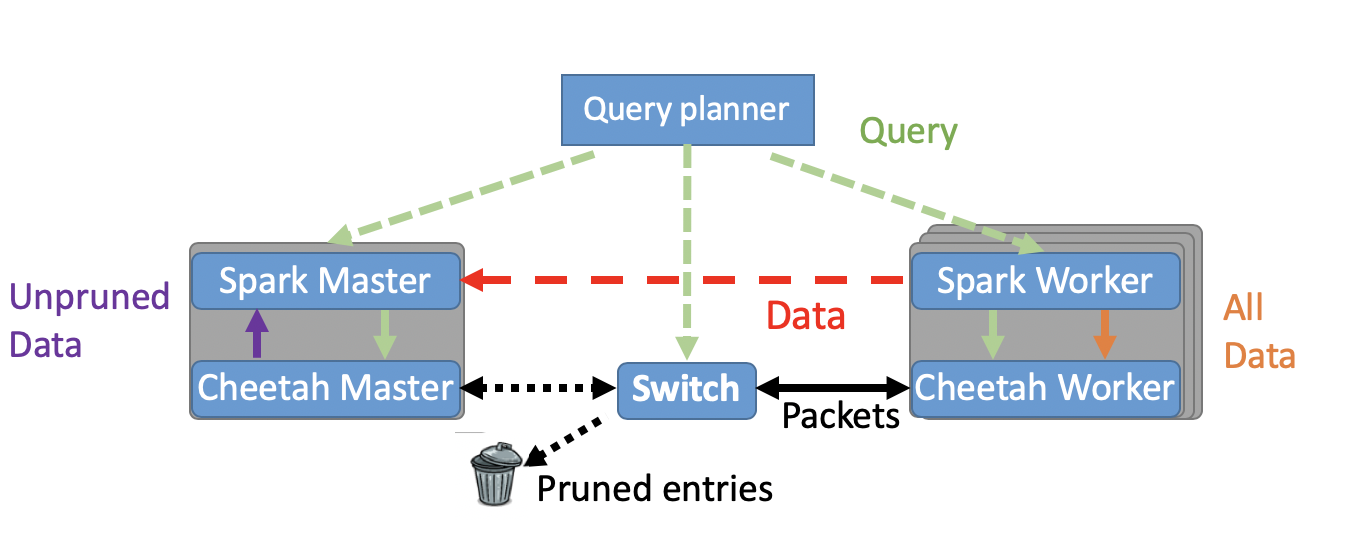}
    \caption{\small Cheetah Design.}
    \label{fig:prototype}
\end{figure}

\parabe{Query planner:}
The way users specify a query in Spark remains unchanged. For example, the query (e.g., SELECT * WHERE $c>\theta$) has three parameters: (1) the query type (filtering in this example), (2) The query parameters $\theta$, and (3) the relevant columns ($c$ in this example). 
In addition to distributing tasks to workers, the query planner sends (1) and (2) to the switch control plane which updates the switch memory accordingly. Once the master receives an ACK from the switch, which acknowledges that it is ready, it starts \mbox{the tasks at workers.}

\parabe{CWorkers:}
With \NAME, the workers do not need to run computationally intensive tasks on the data. Instead, we implement a \NAME module (CWorker), which intercepts the data flow at workers and sends the data directly through the switch. 
Therefore, \NAME reduces the processing time at the workers and partially offloads their tasks to the switch. The workers and the master only need to process the remaining part of the query. CWorkers also convert the related columns into packets that are readable by the switch. For example, if some entries have variable width or are excessively wide (e.g., a \distinct query on multiple columns), CWorkers may compute fingerprints before sending the data out. 

\parabe{\NAME Switch:}
\NAME switch is the core component of our system. We pre-compile all the common query algorithms at the switch. At runtime, when the switch receives a query and its parameters, it simply installs match-action rules based on the query specifications. Since most queries just need tens of rules, the rule installation takes less than 1 ms in our evaluation. According to these rules, the switch prunes incoming packets by leveraging ALUs, match-action tables, registers, and TCAM as explained in~\cref{sec:querypruningalgorithms}. The switch only forwards the remaining packets to the CMaster. The switch identifies incoming packets from workers based on pre-specified port numbers. This allows the switch to be fully compatible with other network functions and applications sharing the same network. Since the switch only adds acceleration functions by pruning the packets, the original query pipeline can work without the switch. If the switch fails, operators can simply reboot the switch with empty states or use a backup ToR switch. We also introduce a new communication protocol, as explained in detail in~\cref{sec:commProt}, that allows the workers to distinguish between pruned packets that are legitimately dropped and lost packets that should be retransmitted.

\parabe{CMaster:} At the master, we implement a \NAME module (CMaster) that converts the packets back to original Spark data format. The Spark master works in the same way with and without \NAME. It ``thinks'' that we are running the query on the pruned dataset rather than the original one and completes the operation. As the pruned dataset is much smaller, the Spark master takes less time to complete on \NAME{}. Many Spark queries adopt late materialization: Spark first runs queries on the metadata fields (i.e., those columns of an entry that the query conditions on) and then fetches all the requested columns for those entries that match the criteria. In this case, \NAME prunes data for the metadata query and does not modify the final fetches.

\begin{table}
{
  \centering
  \resizebox{.95 \columnwidth}{!}{
  \subfloat[\products]{%
                \begin{tabular}{lcc}
                \hline
                \name&\seller&\price\\\hline
                    Burger&McCheetah&4\\
                    Pizza&Papizza&7\\
                    Fries&McCheetah&2\\
                    Jello&JellyFish&5\\\\
                \hline
                \end{tabular}
  }

  \subfloat[\ratings]{%
                    \begin{tabular}{lcc}
                    \hline
                    \name&\taste&\texture\\\hline
                    Pizza&7&5\\
                    Cheetos&8&6\\
                    Jello&9&4\\
                    Burger&5&7\\
                    Fries&3&3\\
                    \hline
                    \end{tabular}
  }}
  }
  \caption{
  Running Database Example}\label{tbl:example}
  \vspace*{-5mm}
\end{table}

\section{Query Pruning Algorithms}\label{sec:querypruningalgorithms}
In this section, we explore the high-level primitives used for our query pruning algorithms.
An example of input tables which we will use to illustrate them is given in Table~\ref{tbl:example}. We also provide a table summarizing our algorithms, their parameters and pruning guarantee type in
\ifdefined\fullversion
Appendix~\ref{app:algTable}.
\else
the full version~\cite{fullVersion}.
\fi

\subsection{Handling Function Constraints}
Due to the switch's function and resource limitations, we cannot always prune a complete query.  
In such cases, \NAME automatically decomposes the query into two parts and prunes \emph{parts} of the query which are supported.

\parab{Example \#1: Filtering Query:}\\
Consider the common query in database of selecting entries matching a WHERE expression, for example:
\begin{lstlisting}[backgroundcolor = \color{lightgray!50},style=SQL, breaklines=true]
SELECT * FROM Ratings WHERE (taste > 5) OR (texture > 4 AND name LIKE e%s)
\end{lstlisting}
The switch may not be able to compute some expressions due to lack of function support (e.g., if it cannot evaluate \name LIKE e\%s) or may lack ALUs or memory to compute some functions. \NAME runs a subset of the predicates at the switch to prune the data and runs the remaining either at the \mbox{workers or the master.}

The \NAME query compiler decomposes the predicates into two parts: Consider a monotone Boolean formula $\phi=f(x_1,\dots,x_n, y_1,\ldots y_m)$ over binary variables $\set{x_i},\set{y_i}$ and assume that the predicates $\set{x_i}$ can be evaluated at the switch while $\set{y_i}$ cannot. The \NAME query compiler replaces each variable $y_i$ with a tautology (e.g., $(T\vee F)$) and applies standard reductions (e.g., modus ponens) to reduce the resulting expression. The resulting formula is computable at the switch and allow \NAME to prune packets. 

In our example, we transform the query into:
\begin{multline*}
\mbox{$(\taste > 5)$ OR $ (\texture > 4 $ AND ($T\vee F$) ) }   \\
\mbox{$\vdash\ (\taste > 5)$ OR $ (\texture > 4) $}.
\end{multline*}

Therefore, \NAME prunes entries that do not satisfy $(\taste > 5)$ OR $ (\texture > 4) $ and let the master node complete the operation by removing \inputRows{} for which $\neg(\taste > 5) \wedge (\texture > 4) \wedge \neg$(\name LIKE e\%s ). 

In other cases, \NAME uses workers to compute the predicates that cannot be evaluated on the switch. For instance, the CWorker can compute (\name LIKE e\%s ) and add the result as one of the values in the sent packet. This way the switch can complete the filtering execution as all predicate values are now known.

\NAME supports combined predicates by computing the basic predicates that they contain ($\taste > 5$ and $\texture > 4$ in this example) and then checking the condition based on the true/false result we obtain for each basic predicate. \NAME writes the values of the predicates as a bit vector and looks up the value in a truth table to decide whether to drop \mbox{or forward the packets.}

\subsection{Handling Stage/ALU Constraints}\label{sec:partitioning}
Switches have limited stages and limited ALUs per stage. Thus, we cannot compare the current \inputRow{} with a sufficiently large set of points. Fortunately, for many queries, we can partition the data into multiple \emph{\matrixRows{}} such that each \inputRow{} is only compared with those in its \matrixRow{}.
Depending on the query, the partitioning can be either randomized or hash-based as \mbox{in the following. }

\parab{Example \#2: \distinct Query:}\\
The \distinct query selects all the distinct values in an \inputCols{} subset, e.g.,
\begin{lstlisting}[backgroundcolor = \color{lightgray!50},style=SQL, breaklines=true]
SELECT DISTINCT seller FROM Products 
\end{lstlisting}
returns (Papizza, McCheetah, JellyFish). To prune \distinct queries, the idea is to store all past values in the switch. When the switch sees a new \inputRow{}, it checks if the new value matches any past values. If so, the switch prunes this \inputRow{}; if not, the switch forwards it to the master node. However, storing all \inputRows{} on the switch may take too much memory.

To reduce memory usage, an intuitive idea is to use Bloom filters (BFs)~\cite{bloom1970space}. However, BFs have false positives. For \distinct, this means that the switch may drop \inputRows{} even if their values have not appeared. Therefore, we need a data structure that ensures no false positives but can have false negatives. Caches match this goal. The master can then remove the false negatives to complete the execution. 

We propose to use a $d\times w$ \emph{matrix} in which we cache \inputRows{}. Every \matrixRow{} serves as a Least Recently Used (LRU) cache that stores the last $w$ \inputRows{} mapped to it. When an \inputRow{} arrives, we first \emph{hash} it to $\set{1,\ldots,d}$, so that the same \inputRow{} always maps to the same \matrixRow{}. \NAME then checks if the current \inputRow{}'s value appears in the \matrixRow{} and if so prunes the packet. To implement LRU, we also do a rolling replacement of the $w$ entries by replacing the first one with the new entry, the second with the first, etc.  By using multiple \matrixRows{} we reduce the number of per-packet comparisons to \mbox{allow implementation on switches.}

In 
\ifdefined\fullversion
Appendix~\ref{app:distinct}.
\else
the full version~\cite{fullVersion}.
\fi
we analyze the pruning ratio on random order streams. Intuitively, if row $i$ sees $D_i$ distinct values and each is compared with $w$ that are stored in the switch memory, then with probability at least $w/D_i$ we will prune every duplicate entry. For example, consider a stream that contains $D=15000$ distinct entries and we have $d=1000$ rows and $w=24$ columns. Then we are expected to prune 58\% of the \emph{duplicate entries} (i.e., entry values that have appeared previously).
\begin{theorem}
Consider a random order stream with $D > d\ln(200d)$ distinct entries\footnote{It's possible to optimize other cases, but this seems to be the common case.}. Our algorithm, configured with $d$ rows and $w$ columns is expected to prune at least $ 0.99\cdot \min\set{\frac{w \cdot d}{D\cdot e},1}$ fraction of the duplicate entries.
\end{theorem}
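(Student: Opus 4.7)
The plan is to combine a Chernoff-type concentration bound on the per-row distinct-value load with a symmetry argument for the LRU cache contents, arranging the constants so that both the $0.99$ slack and the $1/e$ factor in the theorem drop out of the same concentration step.

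First I would bound the number of distinct values $D_h$ landing in each of the $d$ rows. Because the hash distributes the $D$ distinct values uniformly, $D_h$ is a sum of independent Bernoulli indicators with mean $\mu = D/d$, and the standard multiplicative Chernoff bound
\[
\Pr[D_h \ge (1+\delta)\mu] \le \bigl(e^\delta/(1+\delta)^{1+\delta}\bigr)^\mu
\]
specialized at $1+\delta = e$ simplifies to $\Pr[D_h \ge e\mu] \le e^{-\mu}$. The hypothesis $D > d\ln(200d)$ is precisely what is needed to make this at most $1/(200d)$, so a union bound over the $d$ rows yields $D_h \le eD/d$ simultaneously for every $h$ with probability at least $1 - 1/200 \ge 0.99$. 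This is the step that produces both the $0.99$ constant and the $1/e$ factor.

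Next, I would condition on this good event, fix a row $h$, and argue that under the random-order assumption the substream arriving at $h$ is itself in uniformly random order. A symmetry argument then shows that each of the $D_h$ distinct values is equally likely to occupy any of the $w$ LRU slots, so any duplicate arrival at row $h$ is pruned with probability at least $\min\{w/D_h,\,1\} \ge \min\{wd/(eD),\,1\}$. Summing over duplicates and discounting by the failure probability of the concentration step yields the claimed lower bound $0.99 \cdot \min\{wd/(De),\,1\}$ on the expected fraction of duplicates pruned.

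The main obstacle is the symmetry argument. The LRU only updates on cache misses, so the cache contents and the pruning decisions are coupled in a way that breaks a naive symmetry over the $D_h$ distinct values at row $h$. I would handle this by coupling the real LRU with an idealized "oblivious" process in which every arrival triggers a rolling insertion regardless of whether it is a hit; in that process the cache contents are provably a uniformly random $w$-subset of the distinct values seen so far at row $h$, and one can verify by induction on the arrivals that a hit in the idealized process implies a hit in the real one. That makes the idealized pruning probability a valid lower bound and unlocks the clean $w/D_h$ estimate.
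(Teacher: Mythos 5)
Your proposal is correct and follows essentially the same route as the paper's proof: a Chernoff bound at $1+\gamma=e$ plus a union bound over the $d$ rows gives the event $\max_i D_i\le eD/d$ with probability at least $0.99$ (this is exactly where the hypothesis $D>d\ln(200d)$ and the constants $0.99$ and $1/e$ come from), and conditioning on that event each duplicate is pruned with probability at least $\min\{w/D_i,1\}\ge\min\{wd/(De),1\}$. Your coupling with the oblivious always-insert cache is a sound rigorization of the per-row $w/D_i$ step, which the paper only justifies informally.
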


\subsection{Handling Memory Constraints}\label{sec:memory}
Due to switch memory constraints, we can only store a few past entries at switches. The key question is: how do we decide which data to store at switches that maximizes pruning rate without removing useful entries? We give a few examples below to show how we set thresholds (in the \topn query) or leverage sketches (in \join and \having) to \mbox{achieve these goals.}

\parab{Example \#3: \topn Query:}\\
Consider a \topn query (with an ORDER BY clause), in which we are required to output the $N$ \inputRows{} with the largest value in the queried \inputCol{};\footnote{In different systems this operation have different names; e.g., MySQL supports LIMIT while Oracle has ROWNUM.} e.g., 
\begin{lstlisting}[backgroundcolor = \color{lightgray!50},style=SQL, breaklines=true]
SELECT (*\bfseries TOP*) 3 name, texture FROM Ratings ORDER BY taste 
\end{lstlisting}
may return (Jello 4, Cheetos 6, Pizza 5). Pruning in a \topn query means that we may return a superset of the largest $N$ \inputRows{}. The intuitive solution is to store the $N$ largest values, one at each stage. We can then compare them and maintain a rolling minimum of the stages. However, when $N$ is much larger than the number of stages (say $N$ is $100$ or $1000$ compared to 10-20 stages), this approach does not work. 

Instead, we use a small number of threshold-based counters to enable pruning \topn query. The switch first computes the minimal value $t_0$ for the first $N$ entries. Afterward, the switch can safely filter out everything smaller than $t_0$. It then tries to increase the threshold by checking how many values larger than $t_1$ we observe, for some $t_1>t_0$. Once $N$ such entries were processed, we can start pruning entries smaller than $t_1$.
We can then continue with larger thresholds $t_2,t_3,\ldots,t_w$. We set the thresholds exponentially ($t_i=2^i\cdot t_0$) in case the first $N$ is much smaller than most in the data.  This power-of-two choice also makes it easy to implement in switch hardware. When using $w$ thresholds, our algorithm can may get pruning points smaller than $t_0\cdot 2^{w-1}$, if enough larger ones exist. 

\parab{Example \#4: \join Query:}\\
In a \join operation we combine two tables based on the target \inputCols{}.\footnote{We refer here to INNER JOIN, which is SQL's default. With slight modifications, \NAME can also prune LEFT/RIGHT OUTER joins.} For example, the query
\begin{lstlisting}[backgroundcolor = \color{lightgray!50},style=SQL, breaklines=true]
SELECT * FROM Products JOIN Ratings    ON Products.name = Ratings.name
\end{lstlisting}
gives\qquad{} 
                \begin{tabular}{lcccc}
                \hline
                \name&\seller&\price&\taste&\texture\\\hline
                    Burger&McCheetah&4&5&7\\
                    Pizza&Papizza&7&7&5\\
                    Fries&McCheetah&2&3&3\\
                    Jello&JellyFish&5&9&4\\
                \hline
                \end{tabular}\ \ .\\

In the example, we can save computation if the switch identifies that the key "Cheetos" did not appear in the \products table and prune it. To support \join, we propose to send the data through the switch with two passes. In the first pass, we use Bloom Filters~\cite{bloom1970space} to track observed keys. 
Specifically, consider joining tables A and B on \inputCol{} (or \inputCols{}) C. Initially, we allocate two empty Bloom filters $F_A, F_B$ to  approximately record the observed values
(e.g., \name) by using an \inputCol{} optimization to stream the values of C from both tables. Whenever a key $x$ from table A (or B) is processed on the switch, \NAME adds $x$ to $F_A$ (or $F_B$). Then, we start a second pass in which the switch prunes each packet $x$ from A (respectively, B) if $F_B$ ($F_A$) did not report a match. As Bloom Filters have no false negatives, we are guaranteed that \NAME does not prune any matched entry. In the case of \join, The false positives in Bloom Filters only affect the pruning ratio while the correctness is guaranteed.
Such a two pass strategy causes more network traffic but it significantly reduces the processing time at workers.

If the joined tables are of significantly different size, we can optimize the processing further. We first stream the small table without pruning while creating a Bloom filter for it. Since it is smaller, we do not lose much by not pruning and we can create a filter with significantly lower false positive rate. Then, we stream the large table while \mbox{pruning it with the filter.}

\parab{Example \#5: \having Query:}\\

\having runs a filtering operation on top of an aggregate function (e.g.,MIN/MAX/SUM/COUNT). For example, 
\begin{lstlisting}[backgroundcolor = \color{lightgray!50},style=SQL, breaklines=true]
SELECT seller FROM Products GROUP BY seller HAVING SUM(price) > 5
\end{lstlisting}
should return (McCheetah, Papizza). We first check the aggregate function on each incoming entry. For MAX and MIN, we simply maintain a counter with the current max and min value. If it is satisfied, we proceed to our Distinct solution (see \cref{sec:partitioning}) -- if it reports that the current key has not appeared before, we add it to the data structure and forward the entry; otherwise we prune it.

SUM and COUNT are more challenging because a single entry is not enough for concluding whether we should pass the entry. We leverage sketches to store the function values for different entries in a compact fashion. 
We choose Count-Min sketch instead of Count sketch or other algorithms because Count-Min is easy to implement at switches and it has one-sided error. That is, for {\em \having $f(x) > c$}, where $f$ is SUM or COUNT, Count-Min gives an estimator $\widetilde{f(z)}$ that satisfies $\widetilde{f(z)}\ge f(z)$.
Therefore, by pruning only if $\widetilde{f(z)}\le c$ we guarantee that every key $x$ for which $f(x) > c$ makes it to the master. Thus, the sketch estimation error only affects the pruning rate. After the sketch, the switches blocks all the traffic to the master. We then make a partial second pass (i.e., stream the data again), only for the keys requested by the master. That is, the master gets a superset of the keys that it should output and requests all \inputRows{} that belong to them. It can then compute the correct COUNT/MAX and remove the incorrect keys (whose true value is at most $c$).
We defer the support for $\mathit{SUM}$/$\mathit{COUNT} < c$ operations to future work.

\subsection{Projection for High-dimensional Data}
So far we mainly focus on database operations on one dimension. However, some queries depend on values of multiple dimensions (e.g., in \skyline). Due to the limited number of stages and memory at switches, it is not possible to store and compare each dimension. Therefore, we need to project the multiple dimensions to one value (i.e., a fingerprint). The normal way of projection is to use hashing, which is useful for comparing if an entry is equal to another (e.g., \distinct and \join). However, for other queries (e.g., \skyline), we may need to order multiple dimensions so we need a different projection strategy to preserve the ordering.

\parab{Example \#6: \skyline Query:}\\
The Skyline query~\cite{borzsony2001skyline} returns all points on the Pareto-curve of the $D$-dimensional input. 
Formally, a point $x$ is dominated by $y$ only if it is dominated on all dimensions, i.e., $\forall i\in\set{1,\ldots, D}: x_i\le y_i$. The goal of a \skyline query is to find all the points that are not dominated in a dataset.\footnote{For simplicity, we consider maximizing all dimensions. We can extend the solution to support minimizing all dimensions \mbox{with small modifications.}}
For example, the~query 
\begin{lstlisting}[backgroundcolor = \color{lightgray!50},style=SQL, breaklines=true]
SELECT name FROM Ratings        (*\bfseries SKYLINE*) (*\bfseries OF*) taste, texture 
\end{lstlisting}
should return (Cheetos, Jello, Burger). 

Because skyline relates to multiple dimensions, when we decide whether to store an incoming entry at the switch, we have to compare with all the stored entries because there is no strict ordering among them. For each entry, we have to compare all the dimensions to decide whether to replace it. But the switch does not support conditional write under multiple conditions in one stage. These constraints make it challenging to fit \skyline queries on the switch with a \mbox{limited number of stages.}

To address this challenge, we propose to project each high-dimensional \inputRow{} into a single numeric value.
We define a function $h:\mathbb R^D\to\mathbb R$ that gives a single \emph{score} to $D$-dimensional points. We require that $h$ is \emph{monotonically increasing} in all dimensions to ensure that if $x$ is dominated by $y$ then $h(x)\le h(y)$.  In contrast, $h(x)\le h(y)$ does not imply that $x$ is dominated by $y$. For example, we can define $h$ to be the sum or product of coordinates.

The switch stores a total of $w$ points in the switch. Each point $y_{(i)}, i=1..w$ takes two stages: one for $h(y_{(i)})$ and another for all the dimensions in $y_{(i)}$. When a new point $x$ arrives, for each $y_{(i)}$, the switch first checks if $h(x)>h(y_{(i)})$. If so, we replace $h(y_{(i)})$ and $y_{(i)}$ by $h(x)$ and $x$. Otherwise, we check whether $x$ is dominated by $y_{(i)}$ and if so mark $x$ for pruning \mbox{without changing the state.}
Note that here our replace decision is only based on a single comparison (and thus implementable at switches); our pruning decision is based on comparing all the dimensions but the switch only drops the packet at the end of the pipeline (not same stage action). 

If $x$ replaced some $y_{(i)}$, we put $y_{(i)}$ in the packet and continue the pipeline with the new values. We use a rolling minimum (according to the $h$ values) and get that the $w$ points stored in the switch are those with the highest $h$ value so far are among the true skyline.

The remaining question is which function $h$ should be. Product (i.e., $h_P(x)=\prod_{i=1}^D x_i$) is better than sum (i.e., $h_S(x)=\sum_{j=1}^D x_j$) because sum is biased towards the  the dimension with a larger range (Consider one dimension ranges between $0$ and $255$ and another between $0$ and $65535$). However, production is hard to implement on switches because it requires large values and multiplication. Instead, we use Approximate Product Heuristic (APH) which uses the monotonicity of the logarithm function to represent products as sum-of-logarithms and uses the switch TCAM and lookup tables to approximate the logarithm values (see more details in \ifdefined\fullversion
Appendix~\ref{app:skyline}).
\else
the full version~\cite{fullVersion}).
\fi 

\section{Pruning w/ Probabilistic Guarantees}
The previous section focuses on providing deterministic guarantees of pruning, which always ensures the correctness of the query results. Today, to improve query execution time, database systems sometimes adopt probabilistic guarantees (e.g.,~\cite{hu2019output}). This means that with a high probability (e.g., 99.99\%), we ensure that the output is \emph{exactly} as expected (i.e., no missing entries or extra entries). That is, $\Pr[Q(A_Q(D))\ne Q(D)]\le\delta$, where $Q$ is the query, $A_Q$ is the algorithm, and $D$ is the data, as in~\cref{sec:pruning}. Such probabilistic guarantees allow users to get query results much faster.\footnote{The master can check the extra entries before sending the results to users. Spark can also return the few missing entries at a later time.}

By relaxing to probabilistic guarantees, we can improve the pruning rate by leveraging randomized algorithms to select the entries to store at switches and adopting hashing to reduce the memory usage.

\parab{Example \#7: (Probabilistic) \topn Query:}\\
Our randomized \topn algorithm aims at, with a high probability,  returning a superset of the expected output (i.e., \emph{none} of the $N$ \outputRows{} is pruned).
\NAME \emph{randomly} partitions the \inputRows{} into \matrixRows{} as explained in~\cref{sec:partitioning}. Specifically, when an \inputRow{} arrives, we choose a random row for it in $\{1,\ldots,d\}$. In each \matrixRow{}, we track the largest $w$ \inputRows{} mapped to it using a rolling minimum. That is, the largest \inputRow{} in the \matrixRow{} is first, then the second, etc.
We choose to prune any \inputRow{} that was smaller than all $w$ \inputRows{} that were cached in its \matrixRow{}.
\NAME leverages the balls and bins framework to determine how to set the dimensions of the matrix ($d$ and $w$) given $N$, the goal probability $\delta$, and the resource constraints (the number of stages limits $w$ while the per-stage memory restricts $d$).

The algorithm is illustrated in Figure~\ref{fig:topn-example}.

        \begin{figure}[t]
    \centering
    {\includegraphics[width=0.99\linewidth]{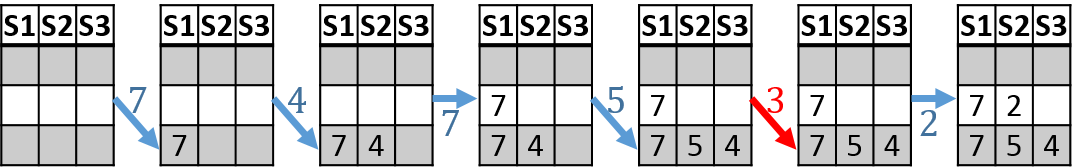}}
    \caption{\small \topn example on a stream (7,4,7,5,3,2). The \inputRow{} $3$ was mapped to the third \matrixRow{} and pruned as all stored values were larger. In contrast, $2$ was mapped to the second \matrixRow{} and is not pruned. The matrix dimensions are chosen so that with high probability none of the \topn \mbox{\inputRow{} is pruned.}
        ~\label{fig:topn-example}}
        \vspace*{-2mm}
\end{figure}

The proper $(w,d)$ configuration of the algorithm is quite delicate.
\ifdefined\fullversion
In Appendix~\ref{app:skyline},
\else
In the full version~\cite{fullVersion},
\fi
we analyze how to set $w$ given a constraint on $d$ or vice versa. We also show that we achieve the best pruning rate when the matrix size $d \cdot w$ is minimized (if there is no constraint on $d$ or $w$), thus optimizing the space and pruning simultaneously.

The goal of our algorithm is to ensure that with probability $1-\delta$, where $\delta$ is an error parameter set by the user, no more than $w$ \topn values are mapped into the same \matrixRow{}. In turn, this guarantees that the pruning operation is successful and that all \outputRows{} are not pruned. In the following, we assume that $d$ is given (this can be derived from the amount of per-stage memory available on the switch) and discuss how to set the number of \matrixCols{}.
To that end, we use $w\triangleq\floor{\frac{1.3\lnp{d/\delta}}{\lnp{\frac{d}{N\cdot e}\lnp{d/\delta}}}}$ \matrixCols{}. For example, if we wish to find the TOP 1000 with probability 99.99\% (and thus, $\delta=0.0001$) and have $d=600$ \matrixRows{} then we use $w=16$ \matrixCols{}. Having more space (larger $d$) reduces $w$; e.g., with $d=8000$ \matrixRows{} we need just $5$ \matrixCols{}. Having too few \matrixRows{} may require an excessive number of \matrixCols{} (e.g., $w=288$ \matrixCols{} are required for $d=200$) which may be infeasible due to the limited number of pipeline stages. 
Due to lack of space, the proof of the theorem appears in 
\ifdefined\fullversion
Appendix~\ref{app:topn}.
\else
the full version~\cite{fullVersion}.
\fi
\begin{theorem}\label{thm:topn}
Let $d,N\in\mathbb N, \delta > 0$ such that $d\ge N\cdot e / \ln\delta^{-1}$ and define $w\triangleq\floor{\frac{1.3\lnp{d/\delta}}{\lnp{\frac{d}{N\cdot e}\lnp{d/\delta}}}}$. Then \topn query succeeds with probability at least $1-\delta$.
\end{theorem}

In the worst case, if the input stream is monotonically increasing, the switch must pass all \inputRows{} to ensure correctness. 
In practice, streams are unlikely to be adversarial as the order in which they are stored is optimized for performance. 
To that end, we analyze the performance on random streams, or equivalently, arbitrary streams that arrive in a random order.
Going back to the above example, if we have $d=600$ \matrixRows{} on the switch and aim to find TOP $1000$ from a stream of $m=8M$ elements, our algorithm is expected to prune at least 99\% of the data. For a larger table of $m=100M$ entries our bound implies expected pruning of over 99.9\%. Observe that the logarithmic dependency on $m$ in the following theorem implies that our algorithm work better for larger datasets. 
The following theorem's proof is deferred to
\ifdefined\fullversion
Appendix~\ref{app:topn}.
\else
the full version~\cite{fullVersion}.
\fi
\begin{theorem}\label{thm:pruning}
Consider a random-order stream of $m$ elements and the \topn operation with algorithm parameters $d,w$ as discussed above. Then our algorithm prunes at least $\parentheses{m-w\cdot d\cdot{\lnp{\frac{m\cdot e}{w\cdot d}}}}$ of the $m$ elements in expectation.
\end{theorem}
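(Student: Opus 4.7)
The plan is to upper bound the expected number of non-pruned elements by $w\cdot d\cdot\lnp{\frac{m\cdot e}{w\cdot d}}$, which yields the theorem by linearity. Since an element is non-pruned exactly when it is among the top-$w$ of all elements in its row that have arrived so far, I will analyze each row separately and then aggregate.

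First I fix the hash-induced partition of the $m$ elements into rows, letting row $r$ contain $k_r$ elements with $\sum_r k_r = m$. The key observation is that because the overall stream is in uniformly random order, the elements inside each row also arrive in a uniformly random order. Then by the classical records argument (symmetry of ranks), the $i$-th arrival in a row is in the top-$w$ of its row's first $i$ arrivals with probability $\min\set{1, w/i}$. Linearity of expectation gives the expected number of non-pruned elements in a row of size $k$ as
\[
R(k) \;=\; \sum_{i=1}^{k}\min\set{1,\,w/i}.
\]
Bounding the tail by an integral, $\sum_{i=w+1}^{k} 1/i \le \lnp{k/w}$, yields $R(k)\le w\bParentheses{1+\lnp{k/w}} = w\,\lnp{e\,k/w}$ whenever $k\ge w$ (and $R(k)\le k$ otherwise, making the final bound trivial in that regime).

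Next I aggregate over rows. The function $R$ is concave, since its forward differences $R(k{+}1)-R(k)=\min\set{1,w/(k{+}1)}$ are non-increasing in $k$. Therefore, by Jensen's inequality applied to the piecewise-linear concave extension of $R$,
\[
\sum_{r=1}^{d} R(k_r) \;\le\; d\cdot R\!\parentheses{\tfrac{1}{d}\sum_r k_r} \;=\; d\cdot R(m/d) \;\le\; w\cdot d\cdot\lnp{\frac{m\cdot e}{w\cdot d}}.
\]
The same inequality holds in expectation if the $k_r$ are random (e.g., under uniform hashing with $\mathbb E[k_r]=m/d$), since in that case $\mathbb E[R(k_r)]\le R(\mathbb E[k_r])$ by Jensen again. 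Combining, the expected number of non-pruned elements is at most $wd\lnp{em/(wd)}$, so the expected pruned count is at least $m-wd\lnp{em/(wd)}$, as claimed.

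The main subtlety is the regime $m/d<w$, where the bound $R(k)\le w\lnp{ek/w}$ no longer holds pointwise for small $k$; there, $R(m/d)\le m/d$ trivially gives non-pruned $\le m$ and the theorem holds vacuously since $m-wd\lnp{em/(wd)}\le 0$. A secondary technicality is applying Jensen to $R$, which is defined on integers: handling this rigorously requires either the linear extension argument sketched above or replacing $R$ directly by the concave envelope $k\mapsto w\lnp{e\max\set{k,w}/w}$, both of which preserve the stated bound in the nontrivial case.
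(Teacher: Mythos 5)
Your proof follows essentially the same route as the paper's: a per-row decomposition, the records/symmetry argument giving probability $\min\{1,w/i\}$ that the $i$-th arrival in a row is not pruned, the harmonic-sum bound $w\bParentheses{1+\lnp{k/w}}$, and Jensen's inequality to aggregate over rows using $\sum_r k_r=m$. If anything you are more careful than the paper, which applies the per-row bound $w\bParentheses{1+\lnp{M_i/w}}$ to every row even though it fails for rows holding fewer than $w$ elements (there the true count is $M_i$ while the bound is below $M_i$); your Jensen step on the concave envelope of $R$ repairs this. One correction to your closing remark: for $m/d<w$ the quantity $m-w d\,\lnp{em/(wd)}$ is nonnegative rather than nonpositive, since $1+\ln s\le s$ for $s\le1$ gives $w d\,\lnp{em/(wd)}\le m$; so the theorem is not vacuous in that regime, and indeed it can fail there (e.g., $d=1$ and $w>m$ makes the logarithm negative and the claimed pruning exceed $m$). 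Like the paper's argument, yours should therefore be read as implicitly assuming $m\ge w\cdot d$, which is the intended operating regime.
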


\paragraph{Optimizing the Space and Pruning Rate}
The above analysis considers the number of \matrixRows{} $d$ as given and computes the optimal value for the number of \matrixCols{} $w$. 
However, unless one wishes to use the minimal number of \matrixCols{} possible for a given per-stage space constraint, we can simultaneously optimize the space and pruning rate.
To that end, observe that the required space for the algorithm is $\Theta(w\cdot d)$, while the pruning rate is monotonically decreasing in $w\cdot d$ as shown in Theorem~\ref{thm:pruning}.
Therefore, by minimizing the product $w\cdot d$ we optimize the algorithm in both aspects.
Next, we note that for a fixed error probability $\delta$ the value for $w$ is monotonically decreasing in $d$ as shown in Theorem~\ref{thm:topn}. Therefore we define $f(d)\triangleq w\cdot d \approx { \frac{d\cdot1.3\lnp{d/\delta}}{\lnp{\frac{d}{N\cdot e}\lnp{d/\delta}}}}$ and minimize it over the possible values of $d$.\footnote{This omits the flooring of $w$ as otherwise the function is not continuous. The actual optimum, which needs to be integral, will be either the minimum $d$ for that value or for $w$ that is off by $1$.}
The solution for this optimization is setting $d\triangleq {\delta}\cdot e^{W(N\cdot e^2/\delta)}$, where $W(\cdot)$ is the Lambert $W$ function defined as the inverse of $g(z)=ze^z$. For example, for finding TOP 1000 with probability 99.99\% we should use $d=481$ \matrixRows{} and $w=19$ \matrixCols{}, even if the per-stage space allows larger~$d$.

\parab{Example \#8: (Probabilistic) \distinct Query:}\\ 
Some \distinct queries run on multiple \inputCols{} or on variable-width fields that are too wide and exceed the number of bits that can be parsed from a packet. 
To reduce the bits, we use \emph{fingerprints},which are short hashes of all \inputCols{} that the query runs on. 

However, fingerprint collisions may cause the switch to prune \inputRows{} that have not appeared before and thus provide inaccurate output.\footnote{Note that for some other queries (e.g., \join), fingerprint collisions only affect the pruning rate, but not correctness.} Interestingly, not all collisions are harmful. This is because the \distinct algorithm hashes each entry into a row in $\{1,\ldots,d\}.$ Thus, a fingerprint collision between two \inputRows{} is bad \emph{only if they are in the same \matrixRow{}}. 

We prove the following bound on the required fingerprint length in
\ifdefined\fullversion
Appendix~\ref{app:distinct}.
\else
the full version~\cite{fullVersion}.
\fi
\begin{theorem}\label{thm:distinct}
Denote$$
\mathcal M\triangleq \begin{cases}
e\cdot D/d &\mbox{if $D>d\ln (2d/\delta)$}\\
e\cdot \ln (2d/\delta) &\mbox{if $d\cdot\ln\delta^{-1}/e \le D\le d\ln (2d/\delta)$}\\
\frac{1.3\lnp{2d/\delta}}{\lnp{\frac{d}{D\cdot e}\lnp{2d/\delta}}}&\mbox{otherwise}
\end{cases} ,
$$ where $D$ is the number of distinct items in the input.
Consider storing fingerprints of size $f=\ceil{\log(d\cdot \mathcal M^2/\delta)}$
bits. Then with probability $1-\delta$ there are no false positives and the distinct operation terminates successfully.
\end{theorem}
For example, if $d=1000$ and $\delta=0.01\%$, we can support up to $500M$ distinct elements using $64$-bits fingerprints \emph{regardless of the data size}. Further, this does not depend on \mbox{the value of $w$.}

The analysis leverages the balls and bins framework to derive bounds on the \emph{sum of square loads}, where each load is the number of distinct elements mapped into a \matrixRow{}. It then considers the number of distinct elements we can support without having same-row fingerprint collisions. For example, if $d=1000$ and the error probability $\delta=0.01\%$, we can support up to $500M$ distinct elements using $64$-bits fingerprints \emph{regardless of the data size}. Further, this does not depend on the value of $w$. 
We also provide a rigorous analysis of the  pruning rate in
Additionally, we analyze the expected pruning rate in random-order streams and show
\ifdefined\fullversion
, in Appendix~\ref{app:distinct},
\else
, in the full version~\cite{fullVersion},
\fi
that we can prune at least an $\Omega\parentheses{\frac{w \cdot d}{D}}$ fraction of the entries, where $D$ is the number of distinct \mbox{elements in the input.}

\section{Handling multiple queries}\label{sec:multiqueries}
\NAME supports the use case where the query is not known beforehand but only the set of queries (e.g., \distinct, \topn, and \join) we wish to accelerate. Alternatively, the workload may contain complex queries that combine several of our operations.
In this scenario, one alternative would be to reprogram the switch once a query arrives. However, this could take upwards of a minute and may not be better than to perform the query without \NAME{}. Instead, we concurrently pack the different queries that we wish to support on the switch data plane, splitting the ALU/memory resources between these. This limits the set of queries we can accommodate in parallel, but allow for \emph{interactive} query processing in a matter of seconds and without recompiling the switch.
Further, not all algorithms are heavy in the same type of resources. 
Some of our queries (e.g., \skyline) require many stages but few ALUs and only a little SRAM. In contrast, \join may use only a couple of stages while requiring most of the SRAM in them. These differences enable \NAME to efficiently pack algorithms on the same stages.

At the switch, all queries will be performed on the incoming data giving us a prune/no-prune bit for each query. Then we have a single pipeline stage that selects the bit relevant to the current query. We fit multiple queries by repurposing the functionality of ALU results and stages. We evaluate one such combined query in figure~\ref{fig:benchmark}. Query A is a filtering query and query B is a SUM + group by query. To prune the filtering query, we only use a single ALU and 32 bits of stage memory (1 index of a 32 bit register) in a stage. We use the remaining ALUs in the same stage to compute 1) hash values and 2) sums required for query B as discussed in our pruning algorithms. We also use the remaining stage memory in that same stage to store SUM results ensuring the additional filter query has no impact on the performance of our group by query. 

In more extreme examples, where the number of computation operations required exceeds the ALU count on the switch, it is still possible to fit a set of queries by reusing ALUs and registers for queries with similar processing. As an example, an ALU that does comparisons for filtering queries can be reconfigured using control plane rules to work as part of the comparator of a \topn or \having query. We can also use a single stage for more than one task by partitioning its memory e.g dedicating part of to fingerprinting for \distinct and another part to store \skyline prune points.

\begin{table*}[t!]
    \centering
    \resizebox{1\linewidth}{!}{
    \begin{tabular}{|ll|c|c|c|c|c|
    }\hline
      \multicolumn{2}{|c|}{\textbf{Algorithm}} & \textbf{Defaults} & \textbf{\#stages} & \textbf{\#ALUs} & \textbf{SRAM} & \textbf{\#TCAM}
      \\\hline\hline
      \multirow{2}{*}{\distinct}{}&FIFO$^{*}$&\multirow{2}{*}{$w=2, d=4096$}&$\ceil{w/A}$&$w$ 
      & \multirow{2}{*}{$(d\cdot w)\times$ 64b}& \multirow{2}{*}{$0$}
      \\\hhline{~-~--}
      &LRU&&$w$&$w$&&
      \\\hline\hline
      \multirow{2}{*}{\skyline{}}&SUM&\multirow{2}{*}{$D=2, w=10$}&\multirow{1}{*}{$\ceil{\log_2 D}+2w$}&\multirow{2}{*}{$2\ceil{\log_2 D}-1+w(D+1)$}&$w(D+1)\times$ 64b&0
      \\\hhline{~-~-~--}
      &APH&&\multirow{1}{*}{$\ceil{\log_2 D}+2(w+1)$}&&$w(D+1)\times$ 64b + $2^{16}\times$ 32b&$64\cdot D$
      \\\hline\hline
      \multirow{2}{*}{\topn{}}&Det&$N=250, w=4$&$w+1$&$w+1$&$(w+1)\times$ 64b&\multirow{2}{*}{$0$}
      \\\hhline{~-----}
      &Rand&$N=250,w=4,d=4096$&$w$&$w$&$(d\cdot w)\times$ 64b&
     \\\hline\hline
      \multicolumn{2}{|c|}{\groupby{}}&$w=8$&$w$&$w$&$d\cdot w\times$ 64b&0
      \\\hline\hline
      \multirow{2}{*}{\join{}}&BF$^{*}$&\multirow{2}{*}{$M=4$MB, $H=3$}&$2$&$H$&$M$&\multirow{2}{*}{$0$}\\\hhline{~-~---}
      &RBF&&$1$&$1$&$M+\binom{64}{H}\times$ 64b&
      \\\hline\hline
      \multicolumn{2}{|c|}{\having{}}&$w=1024, d=3$&$\ceil{d/A}$&$d$&$(d\cdot w)\times$ 64b&0\\
       \hline
    \end{tabular}
    }
    \caption{Summary of the resource consumption of our algorithms. Here, $A$ is the number of ALUs per stage on the switch. The algorithms denoted by (*) assume that \emph{same-stage} ALUs can access the same memory space. For \skyline{} the above assumes that the dimension satisfies $D\le A$.}
    \label{tab:alg2hardware}
\end{table*}

\section{Implementation}\label{sec:impl}
\subsection{\NAME prototype}
We built the \NAME{} dataplane along with in-network pruning using a Barefoot Tofino~\cite{tofino} switch and P4~\cite{P4Spec}. Each query requires between 10 to 20 control plane rules excluding the rules needed for TCP/IP routing and forwarding. Any of the Big Data benchmark workloads can be configured using less than 100 control plane rules. We also developed a DPDK-based Cheetah end-host service using \mbox{about 3500 lines of C.}

We deploy five Spark workers along with an instance of CWorker connected to the switch via DPDK-compliant 40G Mellanox NICs. We restrict the NIC bandwidth to 10G and 20G for our evaluation. All workers have two CPU cores and 4 GB of memory. The CWorker sends data to the master via UDP at a rate of $\sim$10 million packets per second (i.e a throughput of $\sim$ 5.1 Gbps since the minimum ethernet frame is $64$ bytes) with one entry per packet. We use optimized tasks for Spark for a fair comparison. We also mount a linux tmpfs RAM disk on workers to store the dataset partitions allowing Spark to take advantage of its main-memory optimized query plans.

\begin{figure*}
    \centering
    \includegraphics[width=\textwidth]{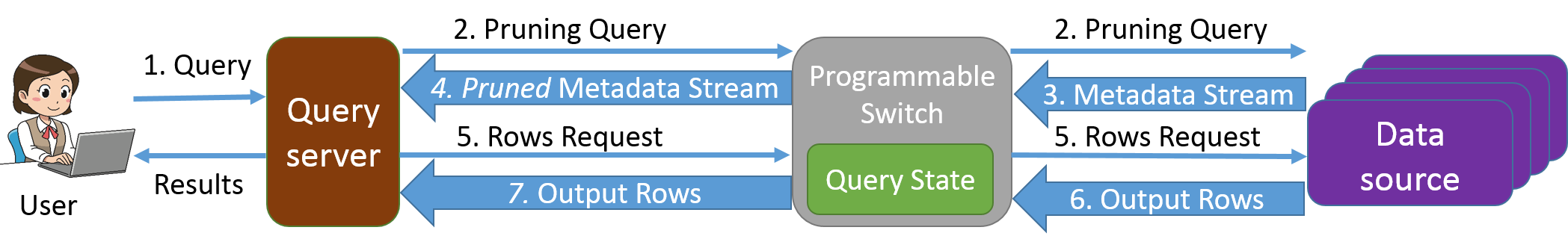}
    \caption{The control flow of Apache Spark and Cheetah}
    \label{fig:paperInAFig}
\end{figure*}

Spark optimizes the completion time by minimizing data movement. In addition to running tasks on workers to reduce the volume sent to the master, Spark compresses the data and packs multiple \inputRows{} in each packet (often, the maximum allowed by the network MTU).
In contrast, \NAME must send the data uncompressed while packing only a small number of entries in each packet. 
\ifdefined\fullversion
As additional optimization, Spark may use \emph{late materialization}. In late materialization, Spark only sends a metadata stream that contains the \inputCols{} that are specified by the query on the first pass. Next, the master decides which \inputRows{} should appear in the final output and requests them from the workers. For example, in our \join query example from~\cref{sec:memory}, Spark may only send the \name column in the first pass. The master then computes the set of keys that are in the intersection of the tables and asks for the relevant \inputRows{}. 
Late materialization is also illustrated in Figure~\ref{fig:paperInAFig}.

For plans that allow late materialization, the switch pruning only occurs in the first round of data movement when the columns relevant to the query are loaded, and does not interfere with the late materialization stage. Hence data can still be transferred in a compressed format during late materialization.
\else
Spark also leverages an optimization called \emph{late materialization}~\cite{latematerialization} in which only a metadata stream is sent in the first stage, and the entire \inputRows{} are requested by the master once it computes which tuples are part of the result. We expand on how \NAME supports late materialization in the full version~\cite{fullVersion}.
\fi

Currently, our prototype includes the \distinct, \skyline, \topn, \groupby, \join, and filtering queries. We also support combining these queries and running them in parallel without reprogramming the switch.

\noindent\textbf{\NAME Modules}
We create two background services that communicate with PySpark called \NAME Master (CMaster) and \NAME Worker (CWorker) running on the same servers that run the Spark Master and Spark Workers respectively. 
The CMaster bypasses a supported PySpark query and instead sends a control message to all CWorkers providing the dataset and the columns relevant to the query.  The CWorker packs the necessary columns into UDP packets with Cheetah headers and sends them via Intel DPDK~\cite{dpdk}. Our experiments show a CWorker can generate over $\approx$12 million packets per second when the \mbox{link and NIC are not a bottleneck.}

In parallel, the master server also communicates with the switch to install the control plane rules relevant to the query. The switch parses and prunes some of the packets it processes. The remaining packets are received at the master using an Intel DPDK packet memory buffer, are parsed, and copied into userspace. The remaining processing for the query is implemented in C. The \NAME master continues processing all entries it receives from the switch until receiving FINs from all CWorkers, indicating that data transmission is done. Finally, the CMaster sends the final set of row ids and values to the Python PySpark script (or shell). 

\noindent\textbf{Switch logic}
We use Barefoot Tofino~\cite{tofino} and implement the pruning algorithms (see~\cref{sec:design}) in the P4 language.  The switch parses the header and extracts the values which then proceed to the algorithm processing. The switch then decides if to prune each packet or forward it to the master. It also participates in our reliability protocol, which takes two pipeline stages on the hardware switch.

\noindent\textbf{Resource Overheads}
All our algorithms are parametric and can be configured for a wide range of resources. We summarize the hardware resource requirements of the algorithms in Table~\ref{tab:alg2hardware} and expand on how this is calculated in the full version of the paper~\cite{fullVersion}.
\subsection{Communication protocol}\label{sec:commProt}
\begin{figure}
        {
        \includegraphics[width =1.0\columnwidth]
            {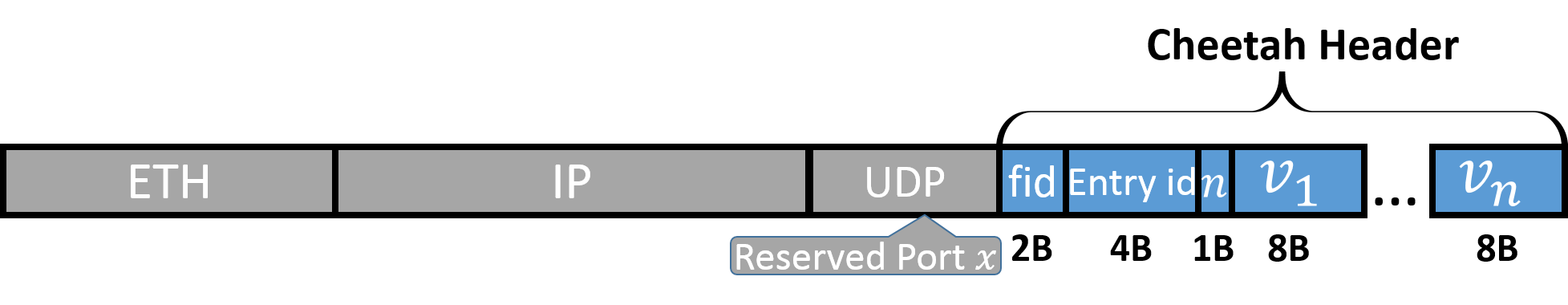} 
            \includegraphics[width =0.83\columnwidth]
            {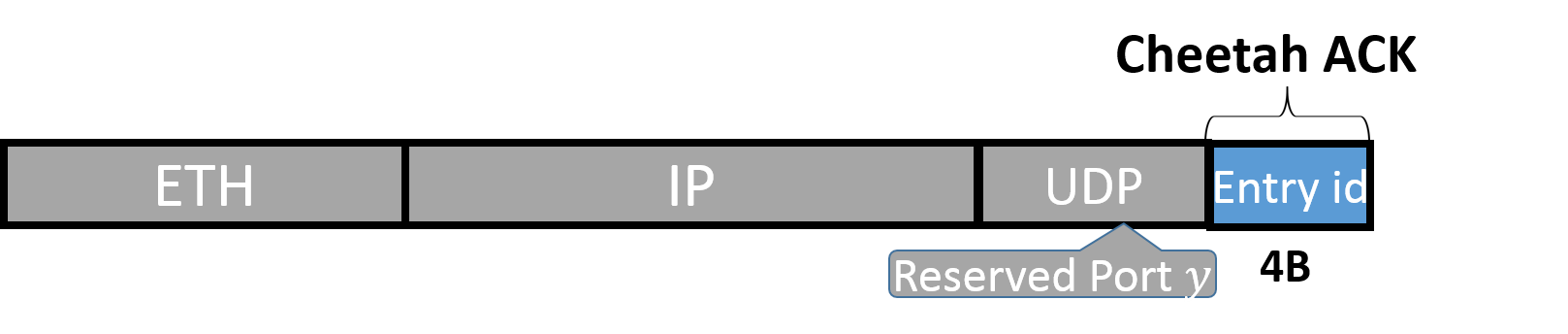}
        }
    \caption{\NAME{} packet and ACK format. The packets encode the flow and entry identifiers in addition to the relevant values.  }\label{fig:header}
\end{figure}

\parab{Query and response format:} 
For communication between the CMaster node and CWorkers (passing through and getting pruned by the switch), we implement a reliable transmission protocol built on top of UDP. 
Each message contains the \inputRow{} identifier along with the relevant column values or hashes.
Our protocol runs on a separate port from Spark and uses a separate header. It also does not use Spark's serialization implementation. 
Its channel is completely decoupled from and transparent to the ordinary communication between the Spark master and worker nodes. Our packet and header formats appear in Figure~\ref{fig:header}. We support variable length headers to allow the different number of columns / column-widths (e.g., \topn has one value per entry while \join/\groupby{} have two or more). The number of values is specified in an 8-bits field ($n$).
The flow ID (fid) field is needed when processing multiple datasets \mbox{and/or queries concurrently.}

For simplicity, we store one entry on each packet; We discuss how to handle multiple entries per packet in Section~\ref{sec:extensions}. 

\parab{Reliability protocol:}
We use UDP connections for CWorkers to send metadata responses to CMasters to ensure low latency. However, we need to add a reliability protocol on top of UDP to ensure the correctness of query results.

The key challenge is that we cannot simply maintain a sequence number at CWorkers and identify lost packets at CMasters because the switch prunes some packets. Thus, we need the switch to participate in the reliability protocol and acks the pruned packets to distinguish them from unintentional packet losses.

Each worker uses the \inputRow{} identifiers also as packet sequence numbers.
It also maintains a timer for every non-ACKed packet and retransmits it if no ACK arrives on time. The master simply acks every packet it receives.
For each fid, the switch maintains the sequence number $X$ of the last packet it processed, regardless of whether it was pruned. When a packet with SEQ $Y$ arrives at the switch the taken action depends on the relation \mbox{between $X$ and $Y$.}

If $Y=X+1$, the switch processes the packet, increments $X$, and decides whether to prune or forward the packet. If the switch prunes the packet, it sends an ACK($Y$) message to the worker. Otherwise, the master which receives the packet sends the ACK.
If $Y\le X$, this is a retransmitted packet that the switch processed before. Thus, the switch forwards the packet without processing it.
If $Y>X+1$, due to a previous packet $X+1$ was lost before reaching the switch, the switch drops the packet and waits for $X+1$ to be retransmitted.

This protocol guarantees that all the packets either reach the master or gets pruned by the switch. 
Importantly, the protocol maintains the correctness of the execution \emph{even if some pruned packets are lost and the retransmissions make it to the master}. The reason is that all our algorithms have the property that any superset of the data the switch chooses not to prune results in the same output. For example, in a \distinct query, if an \inputRow{} is pruned but its retransmission reaches the master, it can simply remove it.

\section{Evaluation}\label{sec:eval}

\begin{figure*}\label{fig:benchmarkAlgorithms}\label{fig:benchmarkQueries}
\includegraphics[width =0.85\textwidth]
            {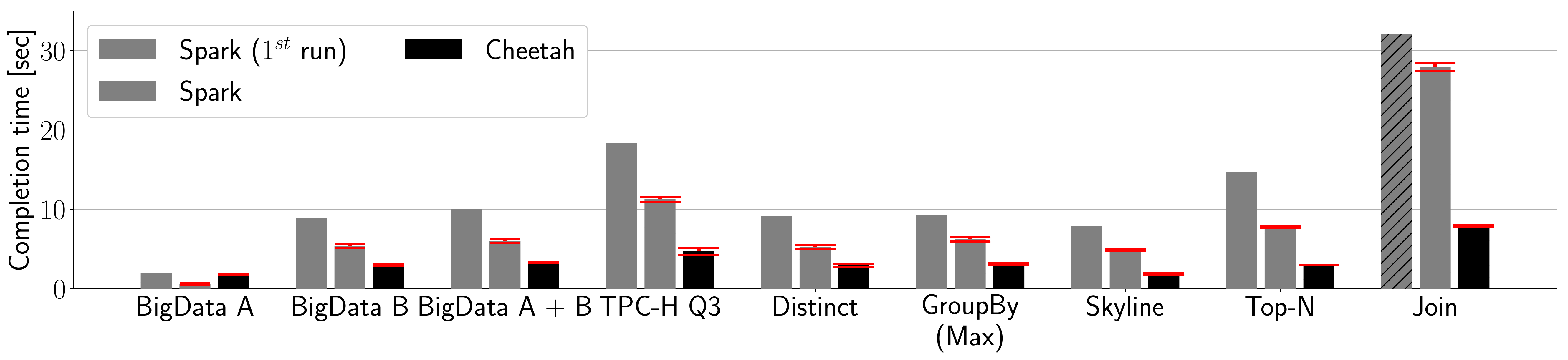}
    \caption{\label{fig:benchmark}\sigmodrev{A comparison of \NAME and Spark in terms of completion time on the Big Data benchmark for the benchmark queries (first four) and the other queries supported by \NAME.
    }
    }

\end{figure*}

We perform test-bed experiments and simulations. Our test-bed experiments show that \NAME has $40-200\%$ improvement in query completion time over Spark. Our simulations show that \NAME achieves a high pruning ratio with a modest amount of switch resources.

\subsection{Evaluation setup}\label{sec:eval_setup}
\parab{Benchmarks:}
Our test-bed experiments use the Big Data~\cite{pavlo} and TPC-H~\cite{tpch} benchmarks.
From the Big Data benchmark, we run queries A (filtering)\footnote{As the data is nearly sorted on the filtered column, we run the query on a random permutation of the table.}, B (we offload group-by), and A+B (both A and B executed sequentially). 

For TPC-H, we run query 3 which consists of two join operations, three filtering operations, a group-by, and a top N. We also evaluate each algorithm separately using a dedicated query on the Big Data benchmark's tables. All queries are shown in the full version of the paper~\cite{fullVersion}.

\subsection{Testbed experiments}

We run the BigData benchmark on a local cluster with five workers and one master, all of which are connected directly to the switch.
Our sample contains 31.7 million rows for the uservisits table and 18 million rows for the rankings table. We run the TPC-H benchmark at its default scale with one worker and one master. \NAME{} offloads the join part of the TPC-H because it takes 67\% of the query time and is the most effective use of switch resources.

\subsubsection{Benchmark Performance } 

Figure~\ref{fig:benchmark} shows that \NAME{} decreases the completion time by $64-75\%$ in both BigData B, BigData A$+$B, and TPC-H Query 3 compared to Spark's $1^{st}$ run and $47-58\%$ compared to subsequent runs.  Spark's subsequent runs are faster than $1^{st}$ run because Spark indexes the dataset based on the given workload after the  $1^{st}$ run. \NAME{} reduces the completion time by $40-72\%$ for other database operations such as distinct, groupby, Skyline, TopN, and Join. \NAME{} improves performance on these computation intensive aggregation queries because it reduces the expensive task computation Spark runs at the workers by offloading it to the switch's data plane instead.

BigData A (filtering) does not have a high computation overhead. Hence \NAME{} has performance comparable to Spark's 1st run but worse than Spark's subsequent runs. This is because \NAME{} has the extra overhead of serializing data at the workers to allow processing at the switches. This serialization adds more latency than the time saved by switch pruning. \NAME{} performs the combined query A $+$ B faster than the sum of individual completion times. This is because it pipelines the pre-processing of columns for the combined query resulting in faster serialization at CWorker.

\subsubsection{Effect of Data Scale and Number of Workers}\label{sec:scaling}\ 
In Figure~\ref{fig:vsNrEntries}, we vary the number of entries per partition (worker) while keeping the total number of entries fixed. Not only is \NAME{} is quicker than Spark, the gap widens as the data scale grows. Therefore \NAME{} may also offer better scalability for large datasets. Figure~\ref{fig:vsNrWorkers} shows the performance when fixing the overall amount of entries and varying the number of workers. \NAME{} improves Spark by about the same factor with a different number of partitions. In both these experiments, we ignore the completion time of Spark's first run on the query and only show subsequent runs (which perform better due to caching / indexing and JIT compilation effects ~\cite{dataanalyticsbottleneck,sparksql}). 
\begin{figure}
    \subfloat[Varying partition size]{\label{fig:vsNrEntries}\includegraphics[width =0.49\linewidth]
    {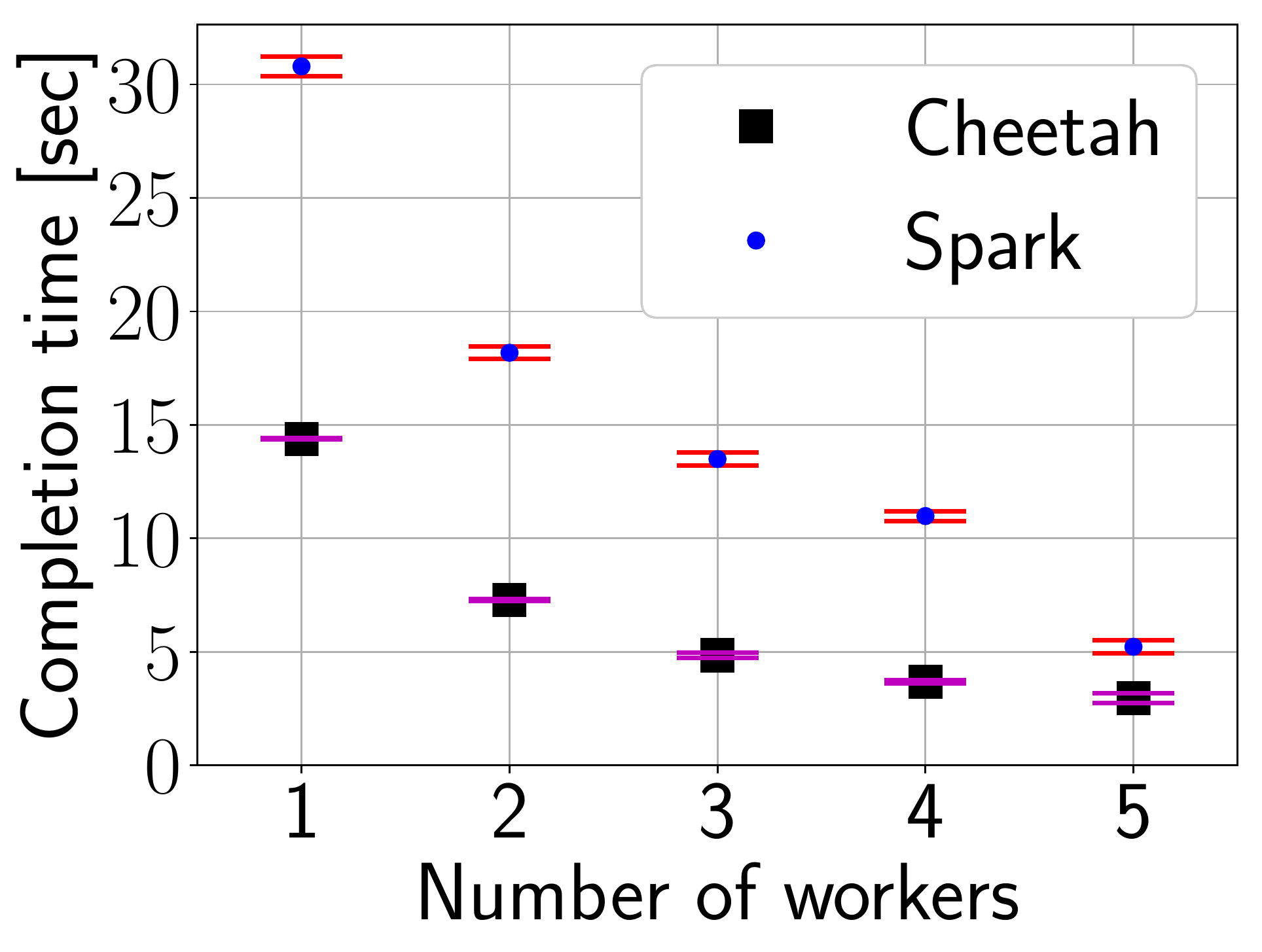}}
    \subfloat[Varying partition count]{\label{fig:vsNrWorkers}\includegraphics[width =0.49\linewidth]
    {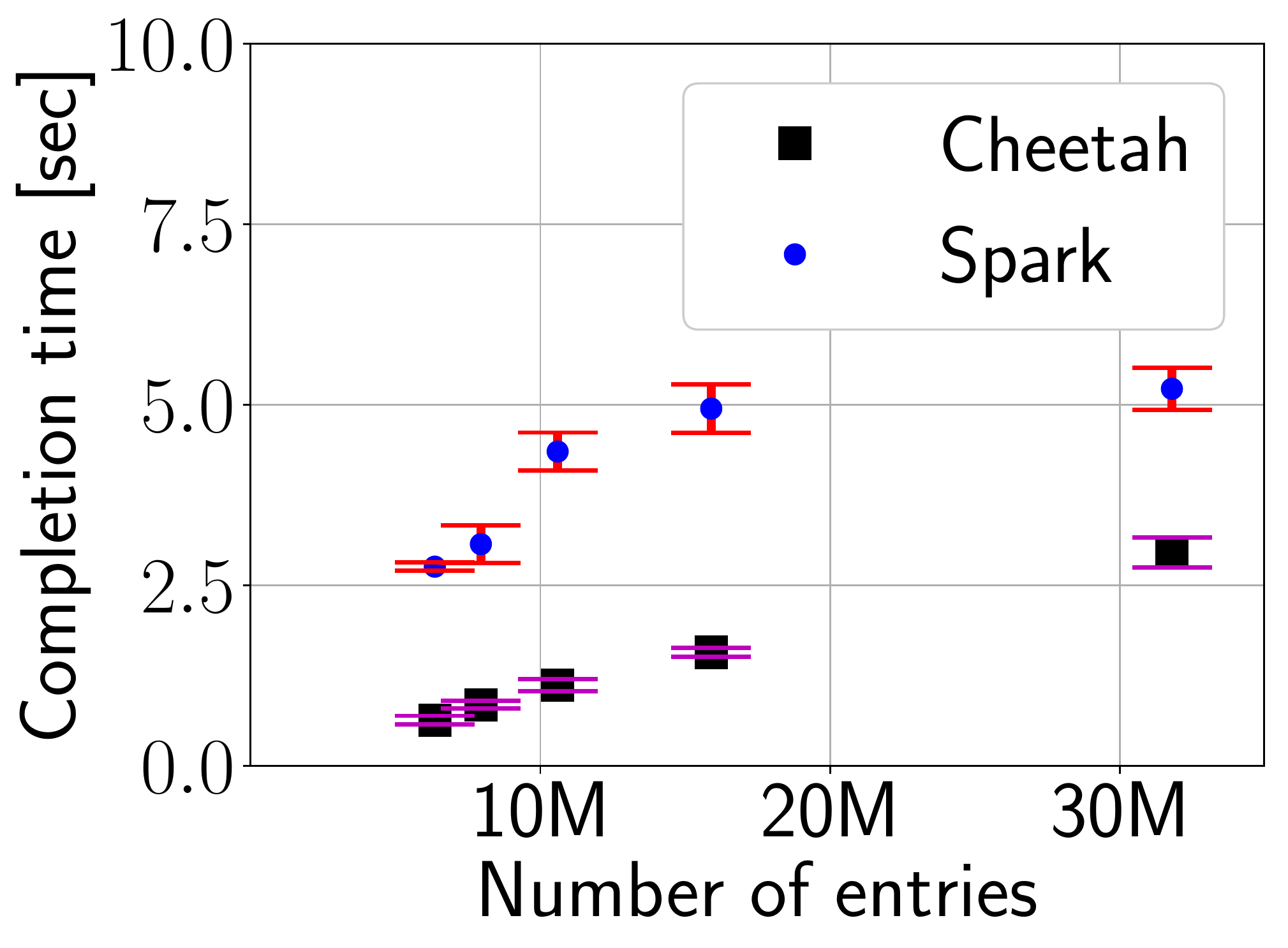}} 
    \vspace*{-2mm}
    \caption{\label{fig:scalability}\small The performance of \NAME{} vs. Spark SQL on \distinct query.
    } 
    \vspace*{-5mm}
\end{figure}

\subsubsection{Effect of Network Rate}\ 
Unlike Spark, which is often bottlenecked by computation~\cite{dataanalyticsbottleneck}, \NAME{} is mainly limited by the network when using a 10G NIC limit. We run \NAME{} using a 20G NIC limit and show a breakdown analysis of where each system spends more time.
 Figure~\ref{fig:breakdown} illustrates how \NAME{} diminishes the time spent at the worker at the expense of more sending time and longer processing at the master. The computation here for \NAME{} is done entirely at the master server, with the workers just serializing packets to send them over the network. When the speed is increased to 20G, the completion time of \NAME improves by nearly 2x, meaning that the network is the bottleneck. 
Similarly to Section~\ref{sec:scaling}, we discard Spark's first run.

\begin{figure}
    \centering
    \includegraphics[width=0.47\textwidth]{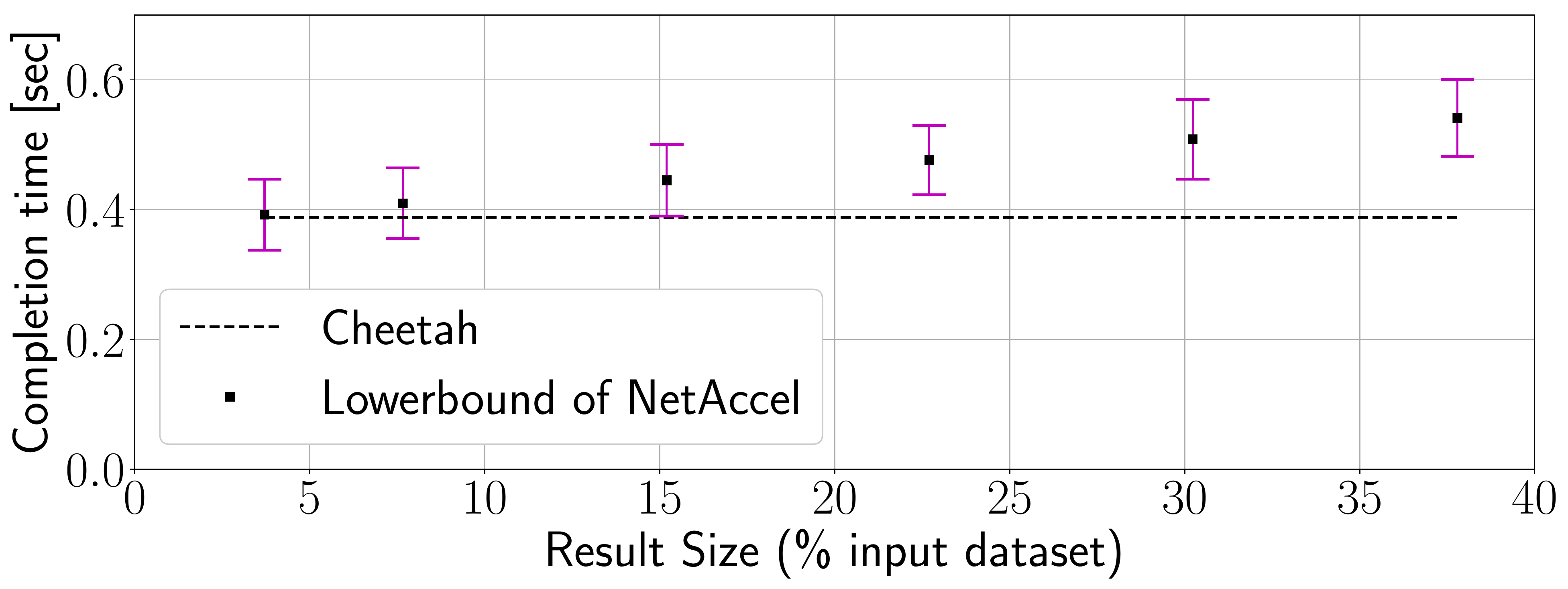}
    \caption{\small Overhead of moving results from the switch dataplane to the master server via packet draining on TPC-H Q3's order key join. We vary the result size by changing filter ranges in the query.}
    \label{fig:netaccel_result_size}
\end{figure}

\subsubsection{Comparison with NetAccel~\cite{lerner2019case}}\label{sec:netaccel} \; NetAccel is a recent system that offloads entire queries to switches. Since the switch data plane limitations may not allow a complete computation of queries, NetAccel \emph{overflows} some of the data to the switch's CPU. At the end of the execution, NetAccel drains the output (which is on the switch) and provides it to the user.
\NAME{}'s fundamental difference is that it does not aim to complete the execution on the switch and only prunes to reduce data size. As a result, \NAME{} is not limited to queries whose execution can be done on the switch (NetAccel only supports join and group by), does not need to read the output from the switch (thereby saving latency), and can efficiently pipeline query execution.
 
As NetAccel stores results at switches, it must drain the output to complete the execution. This process adds latency, as shown in figure~\ref{fig:netaccel_result_size}. We note that NetAccel's code is not publically available, and the depicted results are a lower bound obtained by measuring the time it takes to read the output from the switch. That is, this lower bound represents an ideal case where there are enough resources in the data plane for the entire execution and no packet is overflowed to the CPU. We also assume that NetAccel's pruning rate is as high as \NAME{}'s. Moreover, the query engine cannot pipeline partial results onto the next operation in the workload if \mbox{it stored in the switch.}

\begin{figure}
        \includegraphics[width =.3\textwidth]
            {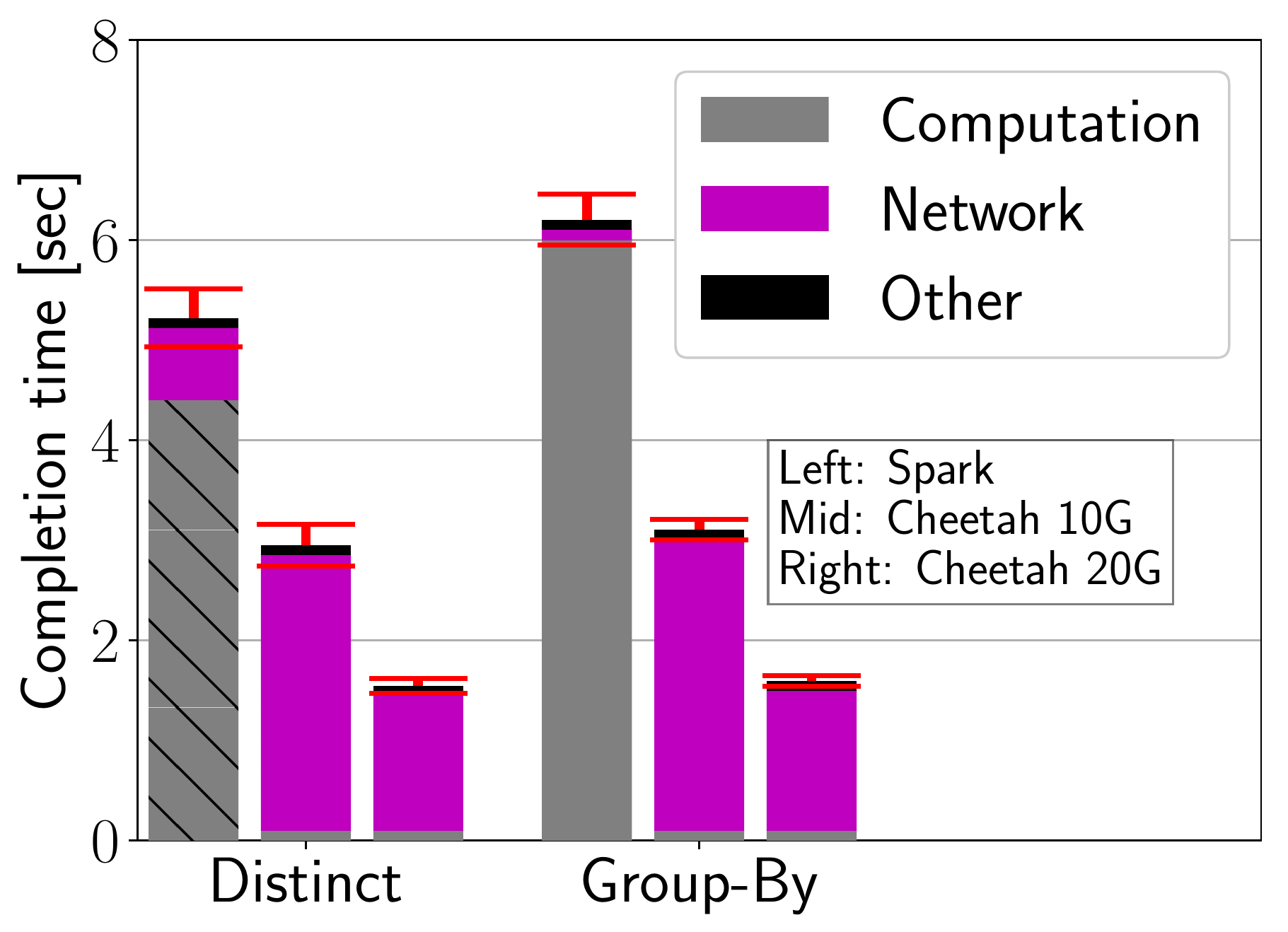} 
    \caption{{\small breakdown of Spark and \NAME{}'s delay for different network rates. Spark's bottleneck is not the network and it does not improve when using a faster NIC.}}
    \label{fig:breakdown}
\end{figure}
\begin{figure}
    \centering{
        {\includegraphics[width =0.8\columnwidth]
            {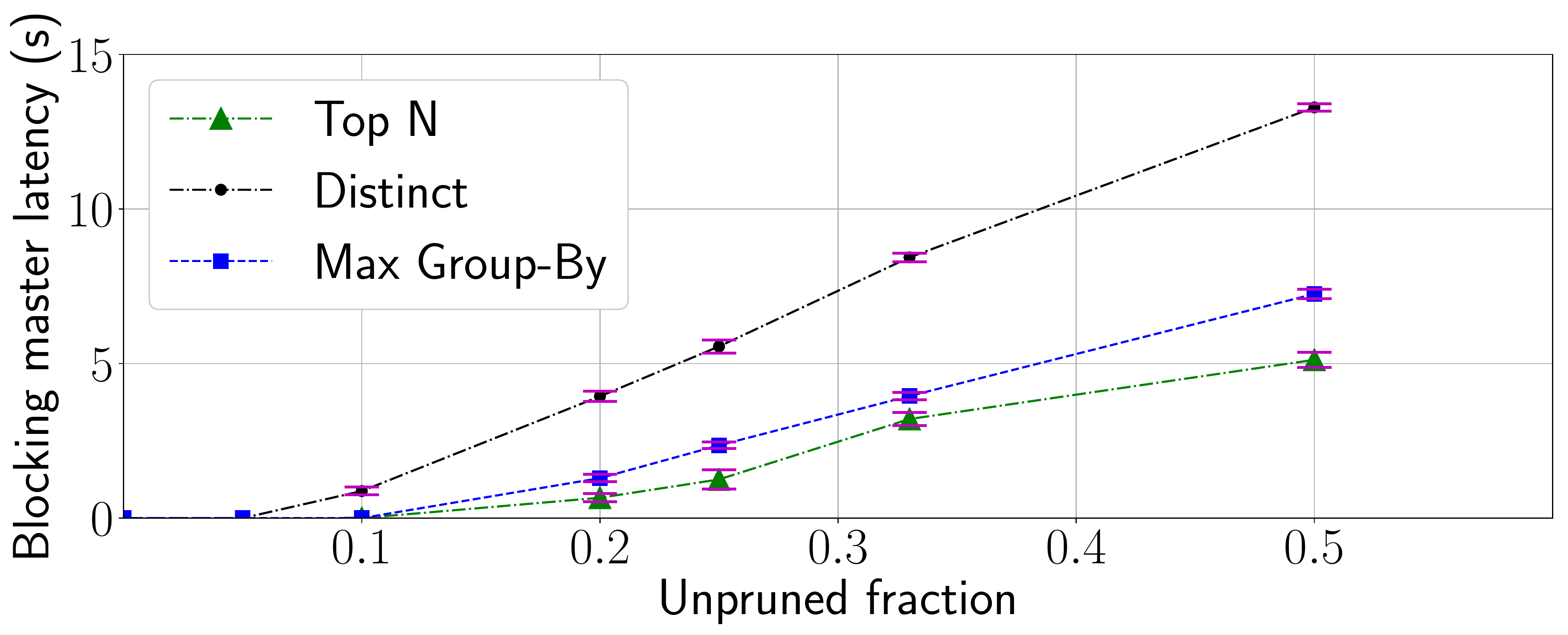}} 
    }
    \caption{\label{fig:MasterTime}\small The time it takes the Master to complete \distinct and max-\groupby queries for a given pruning rate.}
    \vspace*{-4mm} 
\end{figure}

\ifdefined\fullversion
\begin{figure*}[t]
        \subfloat[\distinct ($w=2$)]{\label{fig:distinct}\includegraphics[width =0.32\textwidth]
            {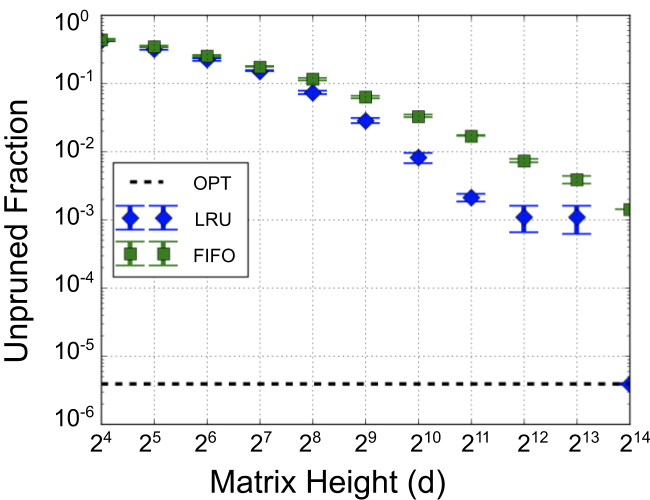}} 
        \subfloat[\skyline]{\label{fig:skyline}\includegraphics[width =0.32\textwidth]
            {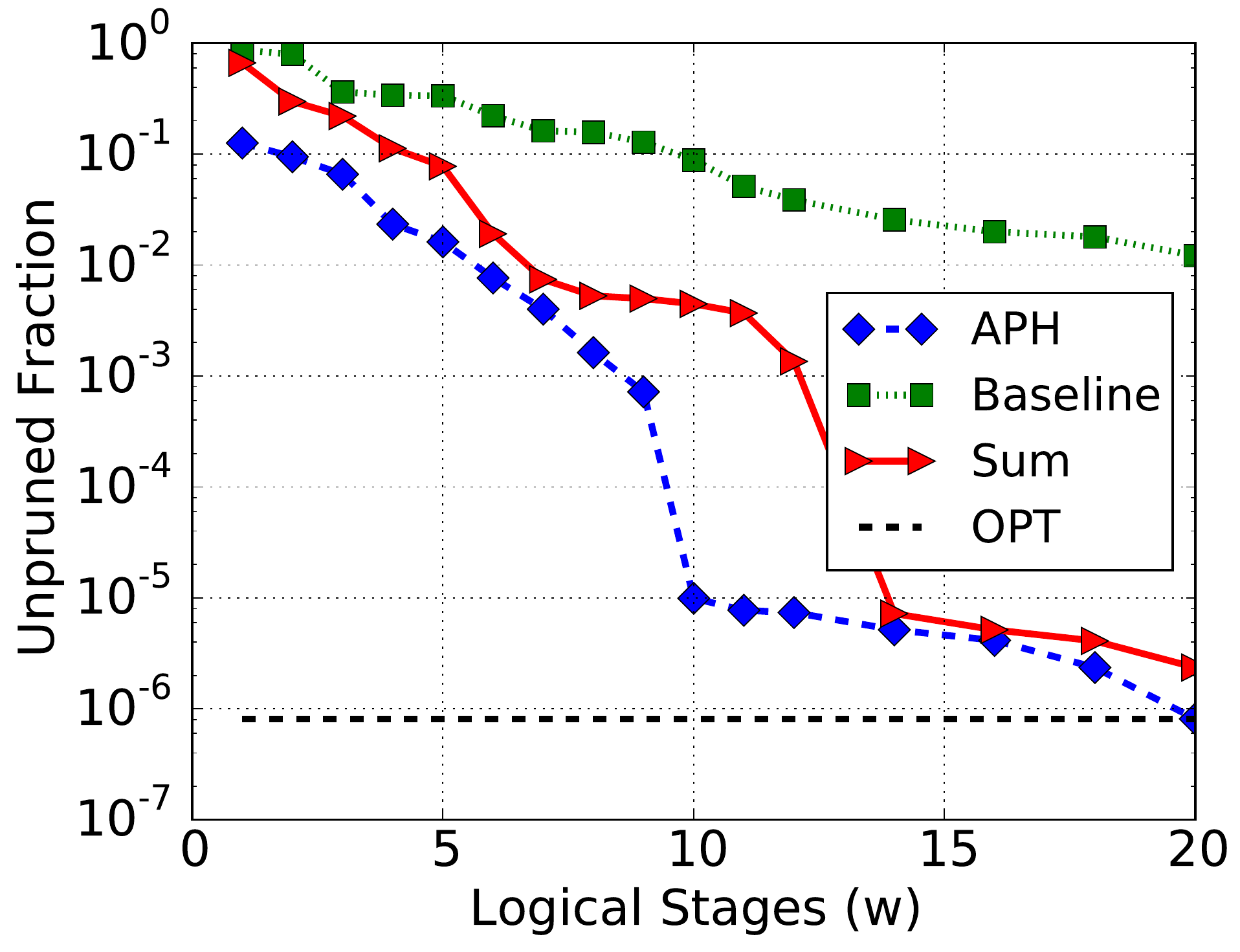}} 
        \subfloat[\topn ($d=4096$)]{\label{fig:topn}\includegraphics[width =0.32\textwidth]
            {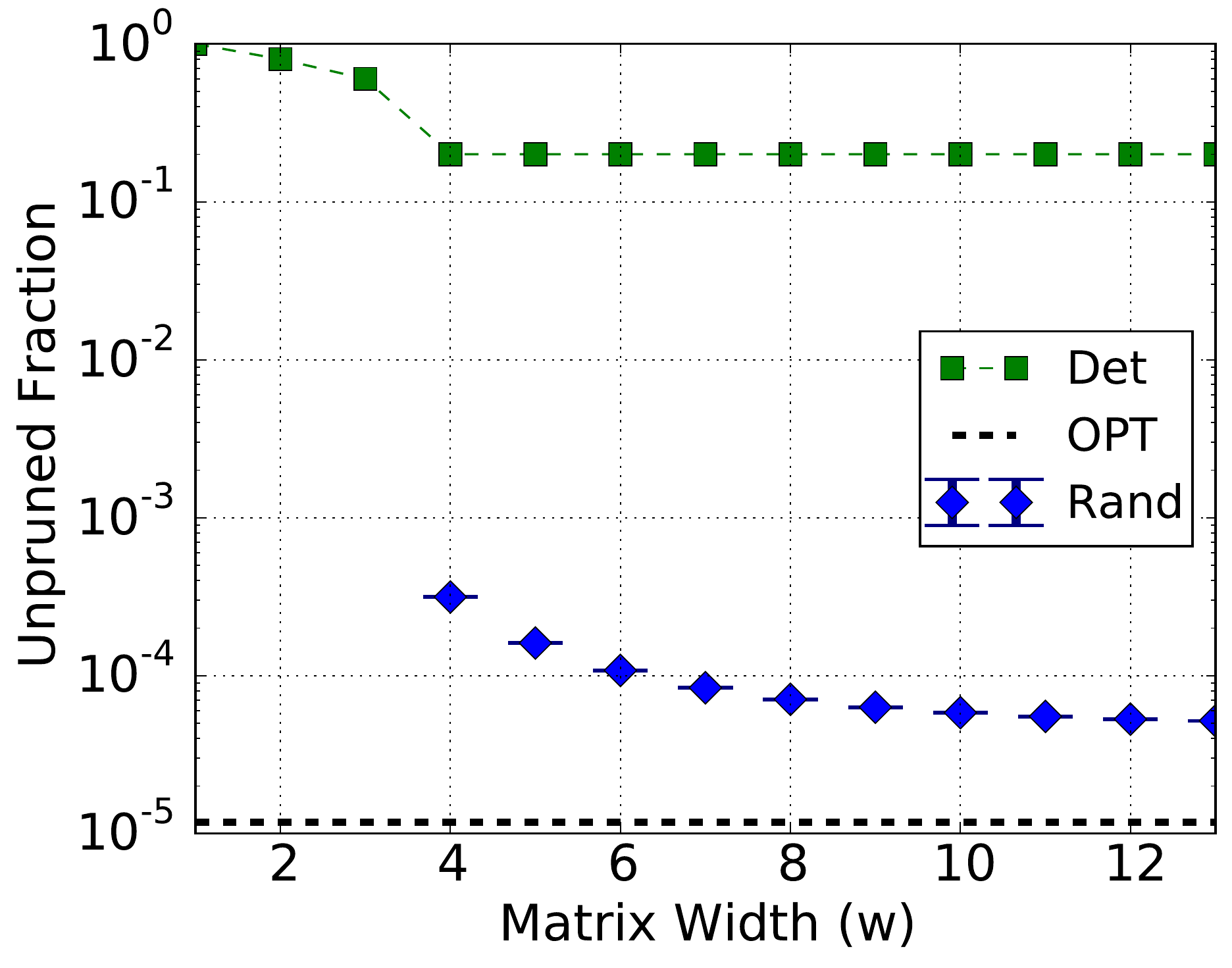}} \\
            \vspace{-4mm}
        \subfloat[\groupby]{\label{fig:groupby}\includegraphics[width =0.32\textwidth]
            {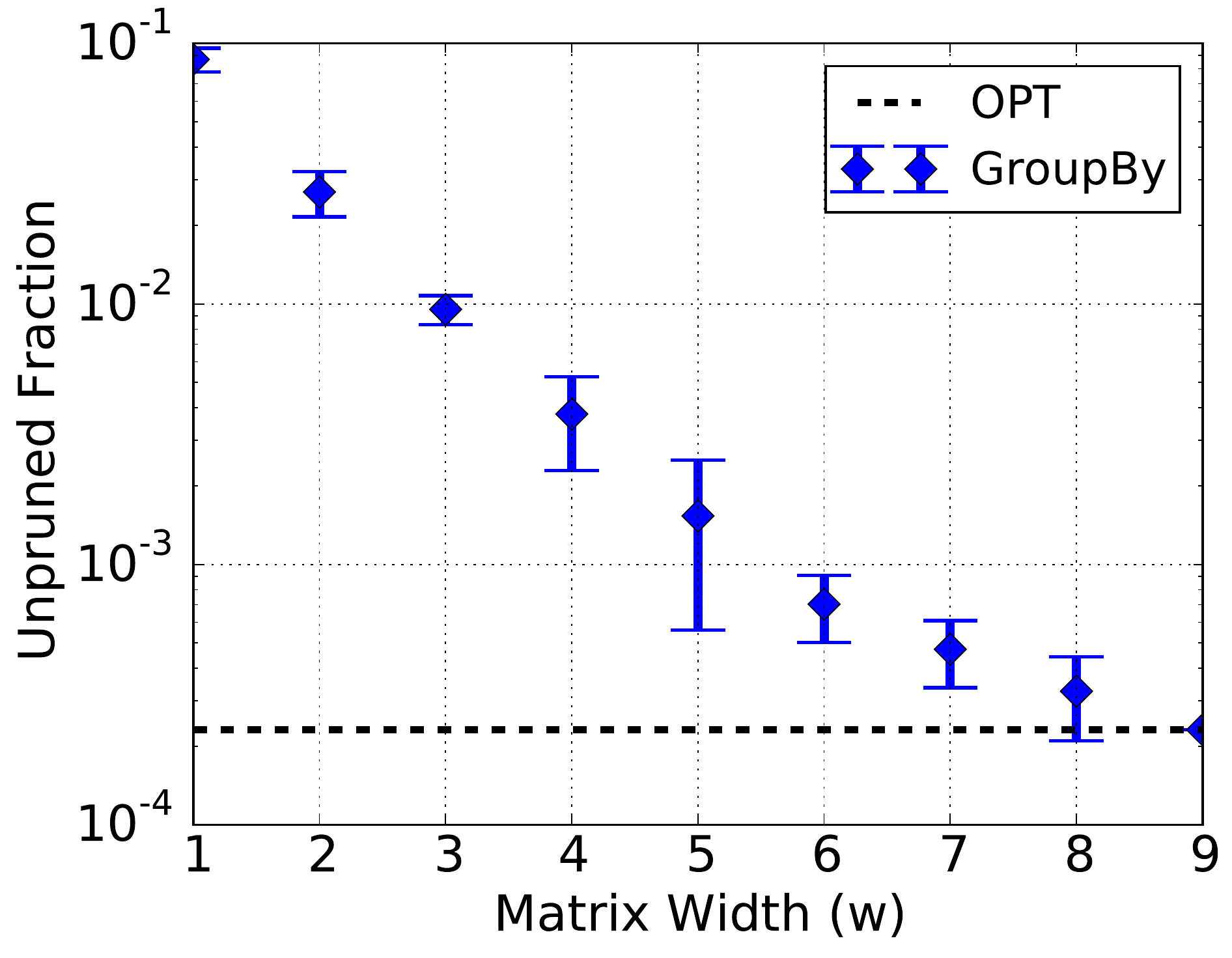}} 
        \subfloat[\join]{\label{fig:join}\includegraphics[width =0.32\textwidth]
            {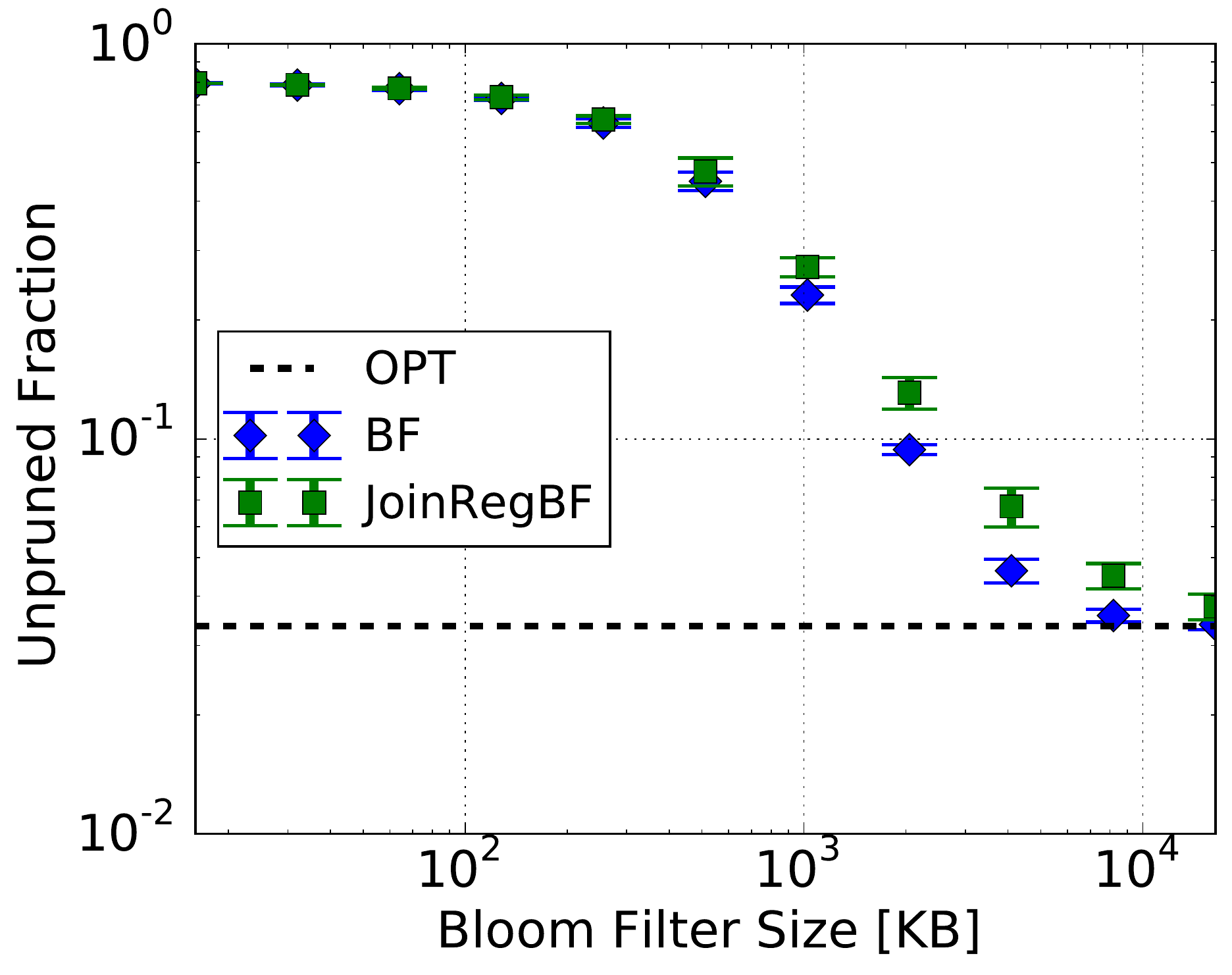}} 
        \subfloat[\having ($3$ Count Min rows)]{\label{fig:having}\includegraphics[width =0.32\textwidth]
            {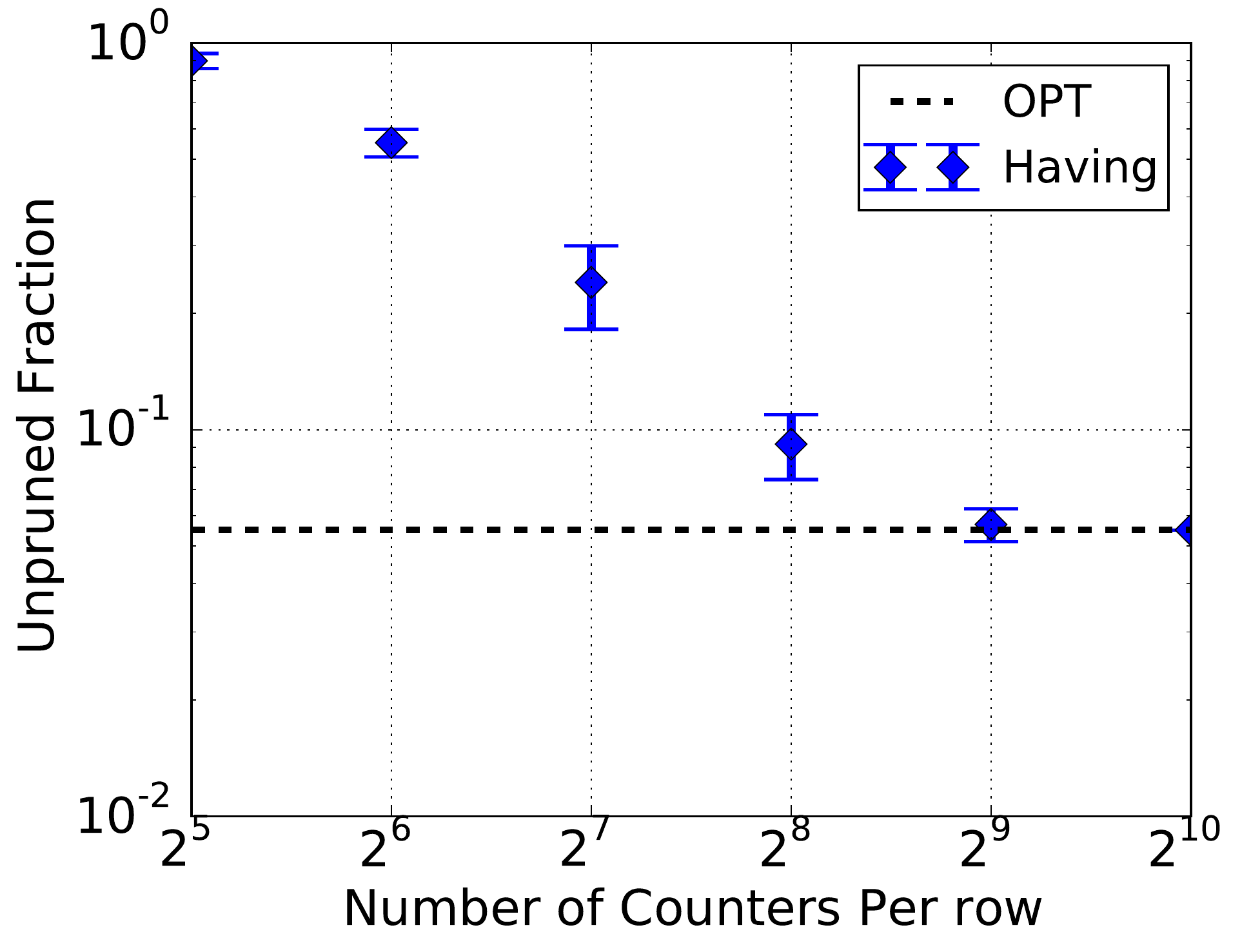}} 
    \vspace*{-2mm}
    \caption{\label{fig:Pruning} The pruning performance of our algorithms for a given resource setting. Notice that the $y$-axis is logarithmic; e.g., $10^{-3}$ means that $99.9\%$ of the \inputRows{} are pruned.}
    \vspace*{-2mm}
\end{figure*}
\else
\begin{figure*}[t]
        \subfloat[\distinct ($w=2$)]{\label{fig:distinct}\includegraphics[width =0.3\textwidth]
            {Figures/Simulations/Distinct_w=2.png}} 
        \subfloat[\skyline]{\label{fig:skyline}\includegraphics[width =0.3\textwidth]
            {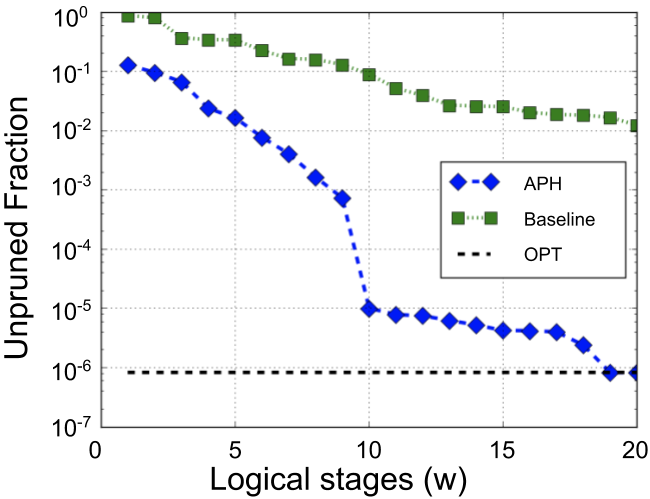}} 
        \subfloat[\topn ($d=4096$)]{\label{fig:topn}\includegraphics[width =0.3\textwidth]
            {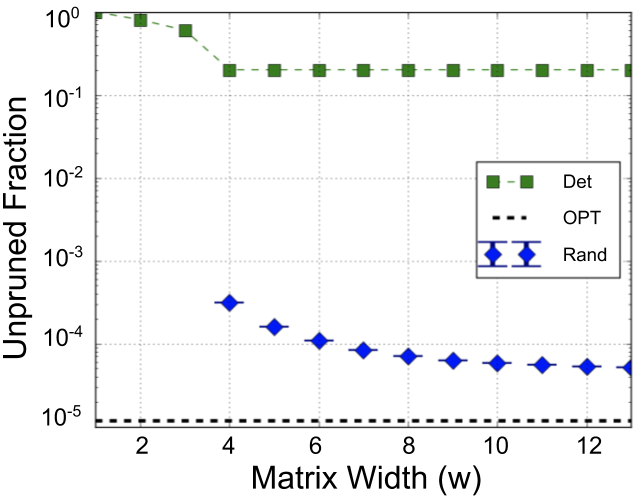}} \\
                    \subfloat[\distinct ($w=2$)]{\label{fig:distinctOverTime}\includegraphics[width =0.3\textwidth]
            {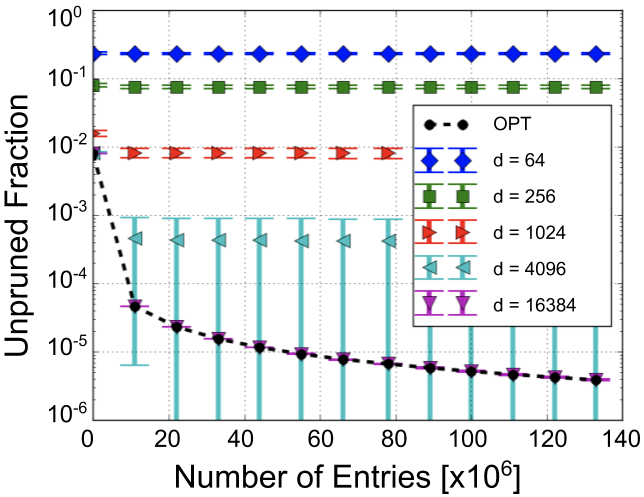}} 
        \subfloat[\skyline (APH Heuristic)]{\label{fig:skylineOverTime}\includegraphics[width =0.3\textwidth]
            {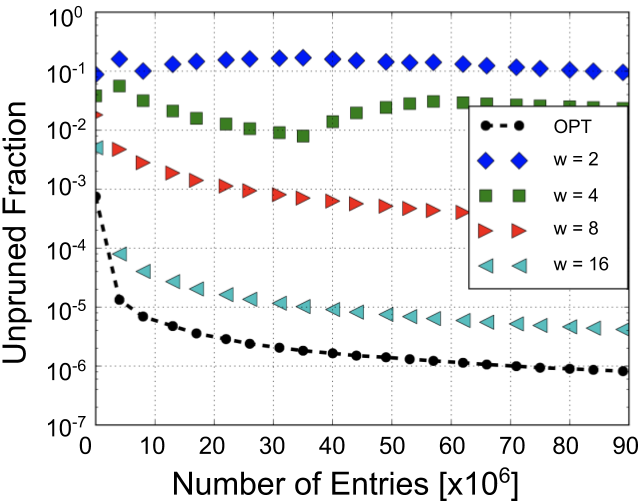}} 
        \subfloat[\topn]{\label{fig:topnOverTime}\includegraphics[width =0.3\textwidth]
            {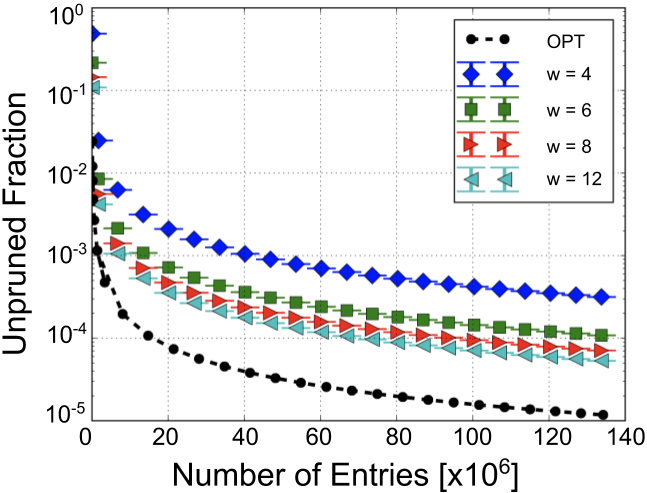}}
    \caption{\label{fig:Pruning} The pruning performance of our algorithms for a given resource setting ((a)-(c)) and vs. the data scale ((d)-(f)). Notice that the $y$-axis is logarithmic; for example, $10^{-3}$ means that $99.9\%$ of the \inputRows{} are pruned.}
\end{figure*}
\fi

\subsection{Pruning Rate Simulations}
We use simulations to study the pruning rates under various algorithm settings and switch constraints.
The pruning rates dictate the number of entries that reach the master and therefore impact the completion time. 

To understand how pruning rate affects completion time, we measure the time the master needs to complete the execution once it gets all \inputRows{}. Figure~\ref{fig:MasterTime} shows that the time it takes the master to complete the query significantly increases with more unpruned entries. The increase is super-linear in the unpruned rate since the master can handle each arriving \inputRow{} immediately when almost all entries are pruned. In contrast, when the pruning rate is low, the \inputRows{} buffer up at the master, causing an increase in the completion time.

The desired pruning rate depends on the complexity of a query's software algorithm. For example, \topn is implemented on the master using an $N$-sized heap and processes millions of entries per second. In contrast, \skyline{} is computationally expensive and thus we should prune more entries to avoid having the master become a bottleneck.
\ifdefined\shortversion
We only show three queries here. Experiments for remaining queries appear in the full version~\cite{fullVersion}.
\fi

\noindent\textbf{Pruning Rate vs. Resources Tradeoff}
We evaluate the pruning rate that \NAME achieves for given hardware constraints.
In all figures, OPT depicts a hypothetical stream algorithm with no resource constraints. For example, in a \topn it shows the fraction of \inputRows{} that were among the $N$ largest \inputRows{} from the beginning of the stream. Therefore, OPT is an upper bound on the pruning rate of \emph{any} switch algorithm. The results are depicted in Figure~\ref{fig:distinct}-\ref{fig:topn}. We ran each randomized algorithm five times and used two-tailed Student t-test to determine the 95\% confidence intervals. We configured the randomized algorithms to $\ge$99.99\% success probability.
\ifdefined\fullversion
\newcommand{\overTimeWidth}{0.325}
\begin{figure*}[t]
        \subfloat[\distinct ($w=2$)]{\label{fig:distinctOverTime}\includegraphics[width =\overTimeWidth\textwidth]
            {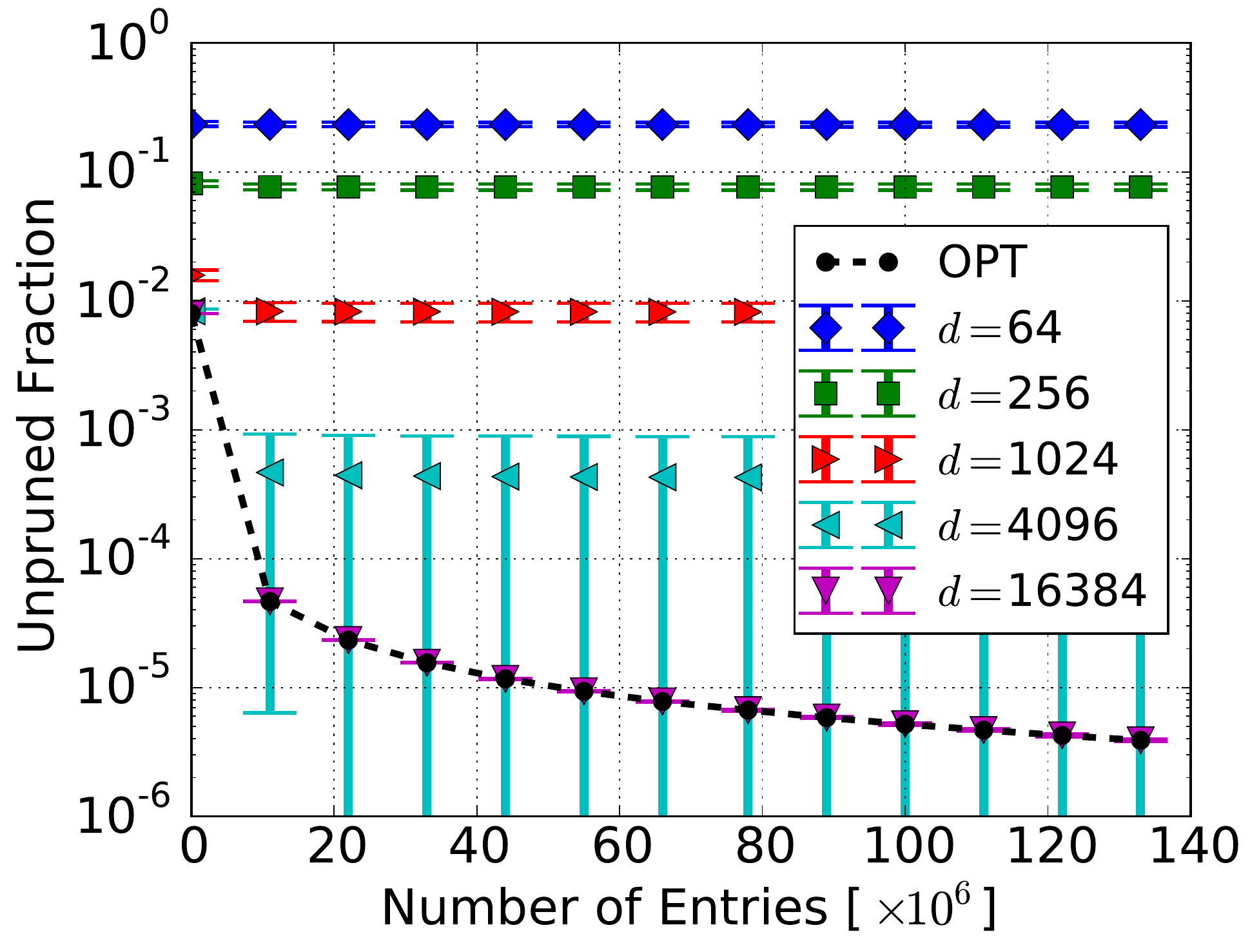}} 
            \hspace{-2mm}
        \subfloat[\skyline (APH Heuristic)]{\label{fig:skylineOverTime}\includegraphics[width =\overTimeWidth\textwidth]
            {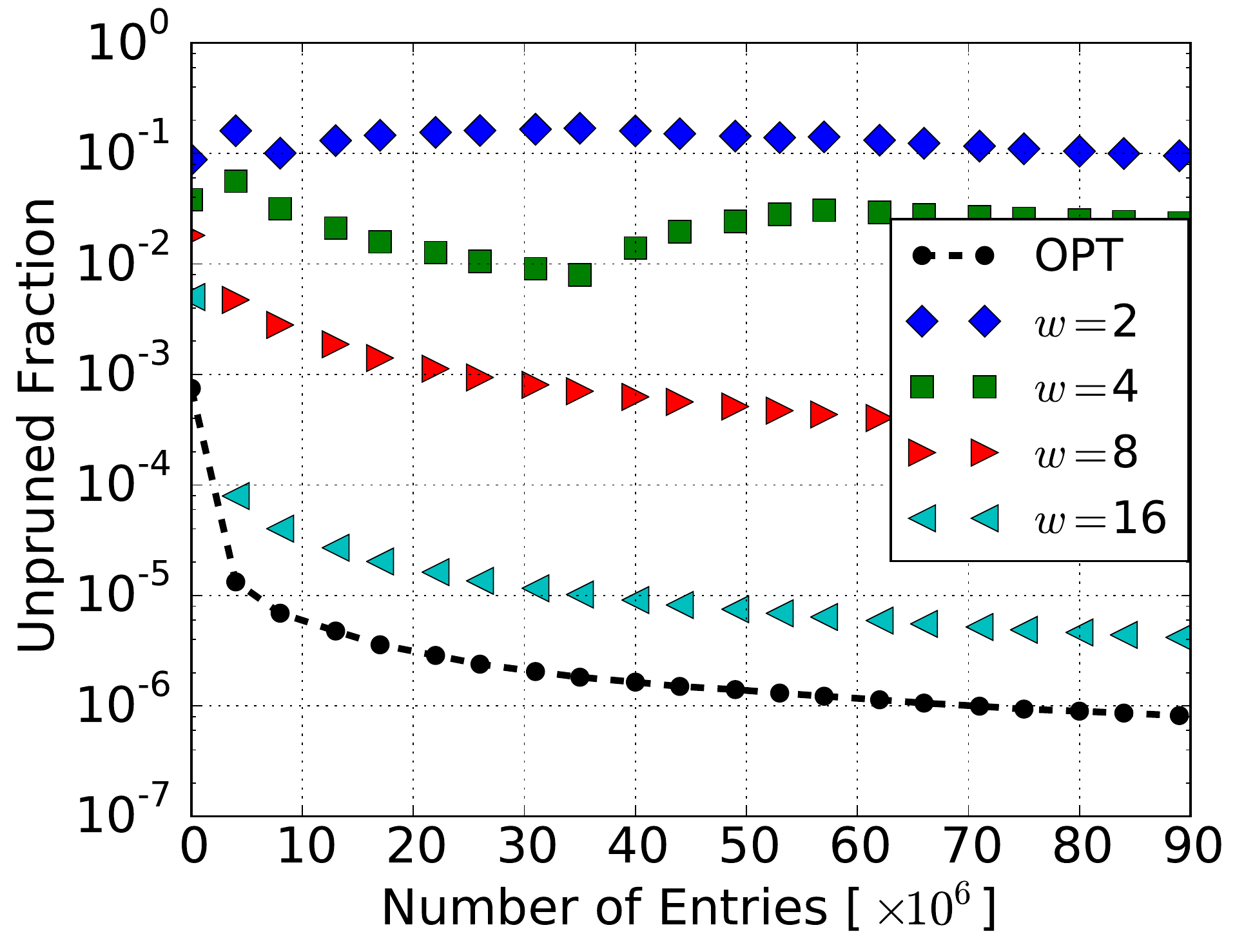}} 
            \hspace{-2mm}
        \subfloat[\topn]{\label{fig:topnOverTime}\includegraphics[width =\overTimeWidth\textwidth]
            {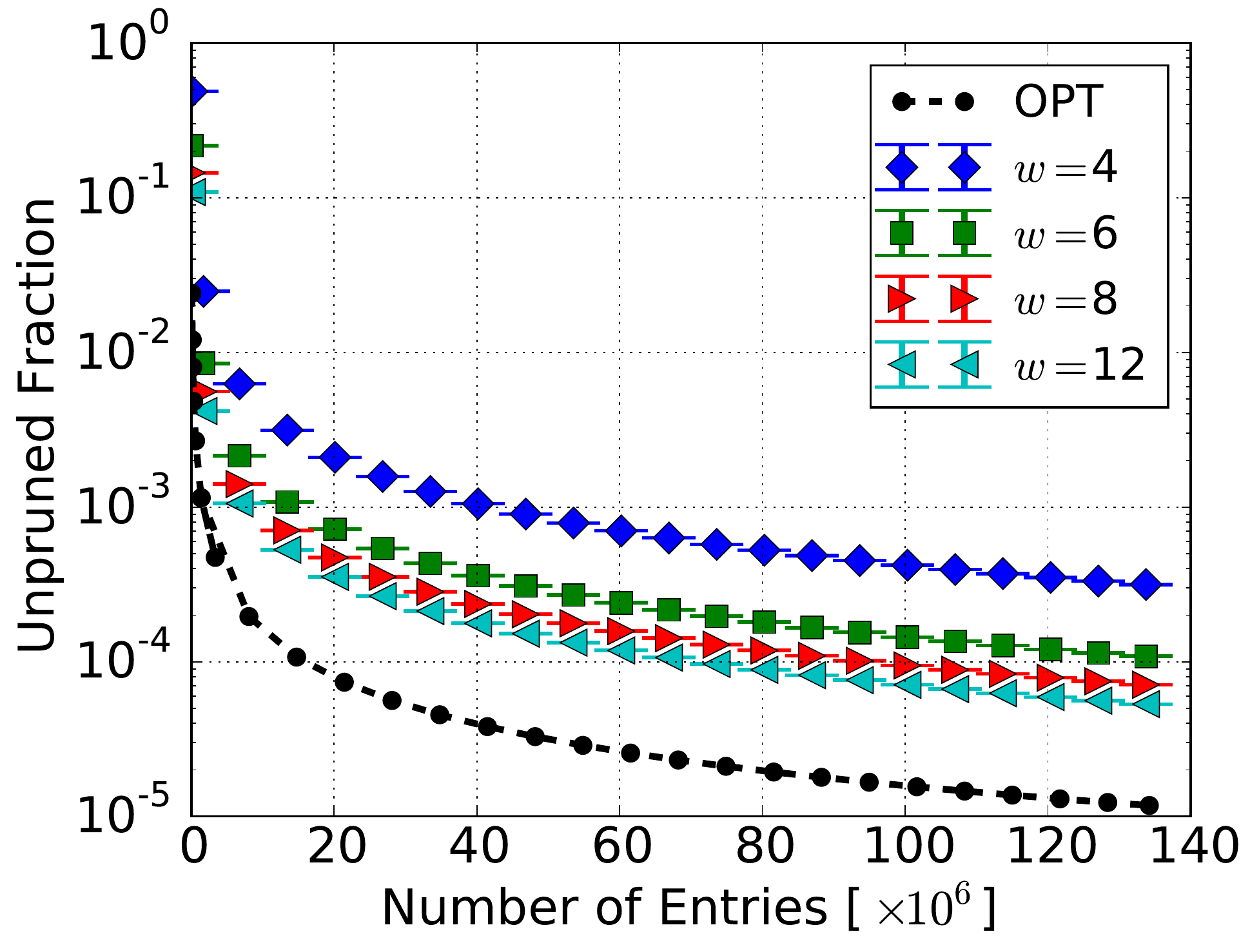}} \\
            \vspace{-4mm}
        \subfloat[\groupby]{\label{fig:groupbyOverTime}\includegraphics[width =\overTimeWidth\textwidth]
            {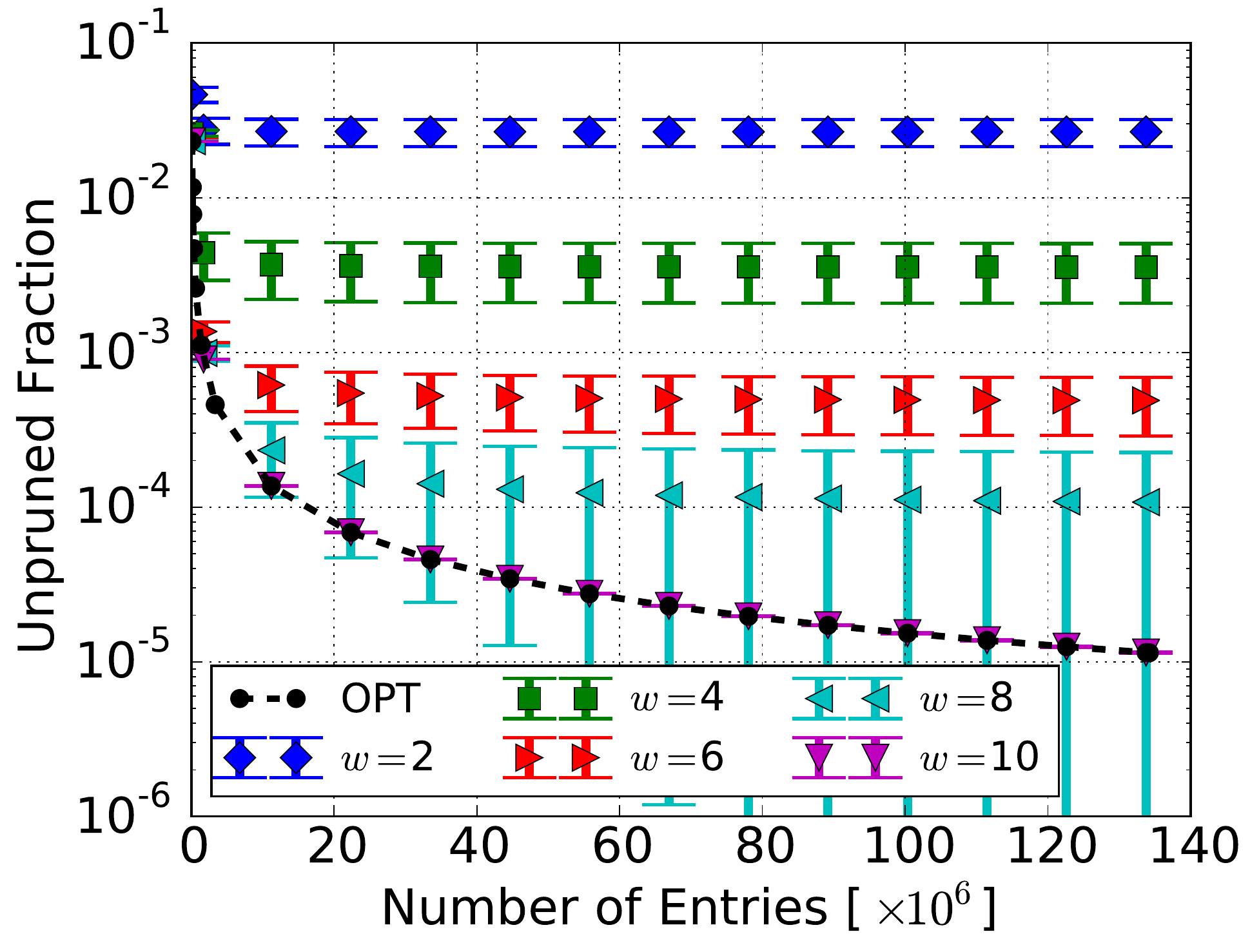}} 
            \hspace{-2mm}
        \subfloat[\join]{\label{fig:joinOverTime}\includegraphics[width =\overTimeWidth\textwidth]
                {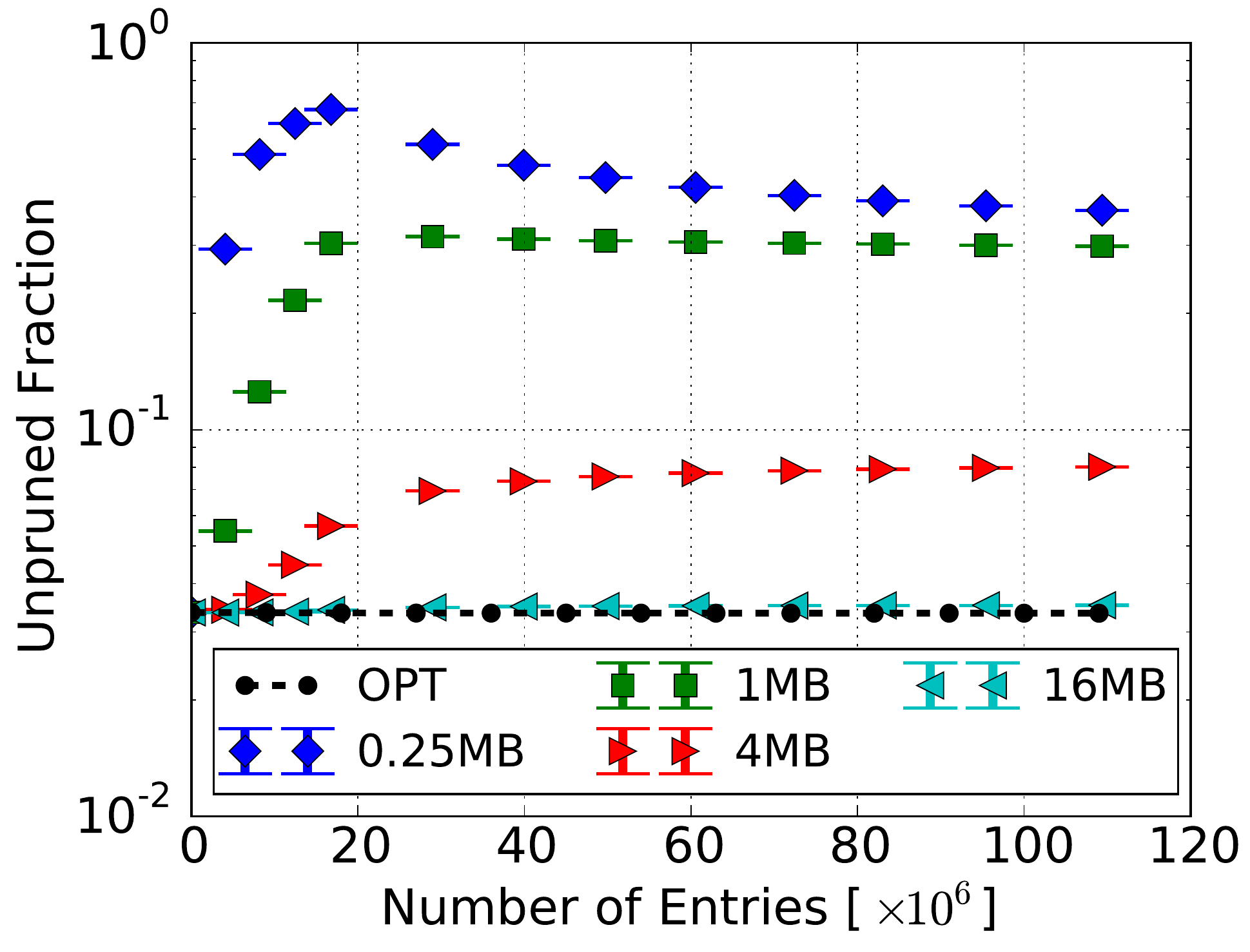}
            } 
            \hspace{-2mm}
        \subfloat[\having]{\label{fig:havingOverTime}\includegraphics[width =\overTimeWidth\textwidth]
            {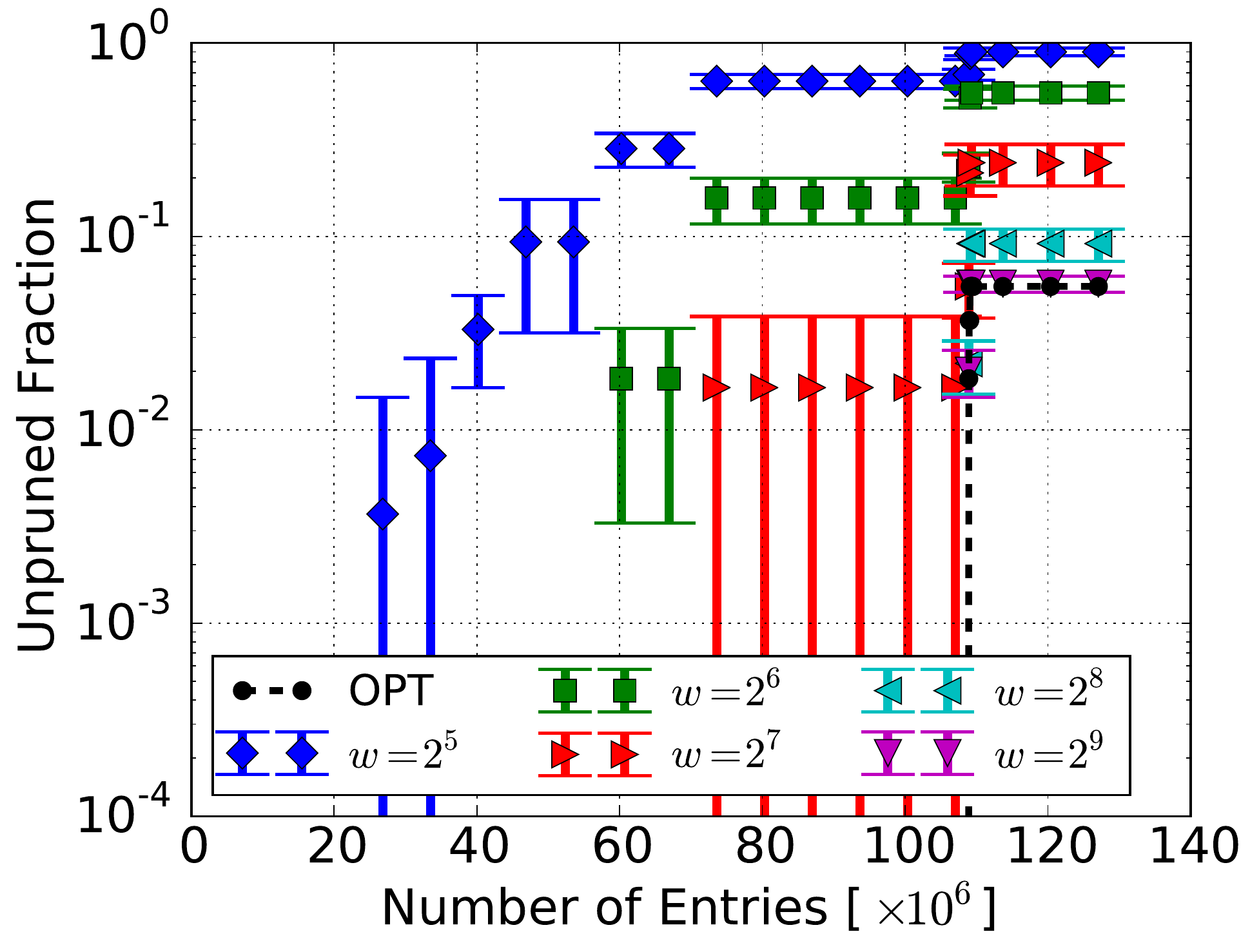}
            } 
    \caption{\label{fig:overTime}The pruning performance of our algorithms for a input size, compared to an ideal streaming algorithm with no resource constraints.}
\end{figure*}
\fi
In~\ref{fig:distinct} we see that using $w=2$ and $d=4096$ \NAME{} can prune all non-distinct entries; with smaller $d$ or the FIFO policy the pruning rate is slightly lower but \NAME{} still prunes over 99\% of the entries using just a few KB of space. In~\ref{fig:skyline} we see \skyline results; as expected, for the same number of points, APH outperforms the SUM heuristic and prunes all non-skyline points with $w=20$. Both APH and SUM prune over 99\% of the entries with $w\le 7$ while Baseline, which stands for an algorithm that store $w$ arbitrary points for pruning, requires $w=20$ for 99\% pruning. Both heuristics allow the switch to "learn" a good set of points to use for the pruning. ~\ref{fig:topn} shows \topn and illustrates the power of the randomized approach. While the deterministic algorithm can run with fewer stages and ensure correctness, if we allow just 0.01\% chance of failure, we can significantly increase the pruning rate. Here, the randomized algorithm reaches about 99.995\% pruning, leaving about $5$ times the optimal number of packets. The strict requirement for high-success probability forces the algorithm to only prune entries which are almost certainly not among the top N.
\ifdefined\fullversion
\ref{fig:groupby} shows the results for \groupby. With as few as $9$ stages \NAME{} discards all unnecessary entries \mbox{while allowing 99\% pruning with just $3$.}
Our \join performance (\ref{fig:join}) shows that \NAME{} requires at least 1MB of space to reach a good pruning rate. The performance of the Bloom Filter and the Register Bloom Filter are quite close performance wise and both reach near-optimal pruning with 16MB of space. For \having{}, as we see in~\ref{fig:having}, \NAME{} gets perfect pruning using $1024$ counters in each \mbox{of the three rows.}
\fi

\noindent\textbf{Pruning Rate vs. the Data Scale:}
The pruning rate of different algorithms behaves differently when the data scale grows. Here, each data point refers to the first entries in the relevant data set.
\ifdefined\fullversion
Specifically, as we show in Figure~\ref{fig:Pruning}, the pruning rate of several algorithms goes lower while others improve as the scale grows. 
 \ref{fig:distinctOverTime}, \ref{fig:skylineOverTime}, \ref{fig:topn}, and \ref{fig:groupbyOverTime} show that for \distinct, \skyline, \topn, and \groupby{} \NAME{} achieves better pruning rate for larger data. For \distinct and \groupby it is because we cannot prune the first occurrence of an output key, but once our data structure has these reflected it gets better pruning in the future. In \skyline and \topn, a smaller fraction of the input entries are needed for the output as the data scale grows, allowing the algorithms to prune more entries.
In contrast, the algorithms for \join and \having (\ref{fig:joinOverTime} and~\ref{fig:havingOverTime}) have better pruning rates for \emph{smaller} data sets. In \join, the algorithm experiences more false positives as the data keep on coming and therefore prunes a smaller fraction of the entries. The \having query is conceptually different; as it asks for the codes for languages whose sum-of-ad-revenue is larger than \$1M, the output is \emph{empty} if the data is too small. The one-sided error of the Count Min sketch that we use guarantees that we do not miss any of the correct output keys but the number of false positives increases as the data grows. Nevertheless, with as few as $512$ counters for each of the three rows, \NAME{} gets near-perfect pruning throughout the evaluation.
\else
Figures \ref{fig:distinctOverTime} - \ref{fig:topnOverTime} show how the pruning rate varies as the scale grows. For the shown queries, \NAME{} achieves better pruning rate for larger data. For \distinct it is because we cannot prune the first occurrence of an output key, but once our data structure has these reflected it gets better pruning. In \skyline and \topn, a smaller fraction of input entries are needed for the output as the data scale grows, allowing the pruning of more entries.
In contrast, the algorithms for \join and \having have better pruning rates for \emph{smaller} data sets, as shown in the full version~\cite{fullVersion}. In \join, the algorithm experiences more false positives as the data keep on coming and therefore prunes a smaller fraction of the entries. The \having query is conceptually different; as it asks for the codes for languages whose sum-of-ad-revenue is larger than \$1M, the output is \emph{empty} if the data is too small. The one-sided error of the Count Min sketch that we use guarantees that we do not miss any of the correct output keys but the number of false positives increases as the data grows. Nevertheless, with as few as $512$ counters for each of the three rows, \NAME{} gets near-perfect pruning throughout the evaluation.
\fi
\section{Extensions}\label{sec:extensions}
\parab{Multiple switches:}
We have considered a single programmable switch in the path between the workers and the master. However, having multiple switches boosts our performance further. For example, we can use a ``master switch'' to partition the data and offload each partition to a different switch. Each switch can perform local pruning of its partition and return it to the master switch which prunes the data further. This increases the hardware resources at our disposal and allows superior pruning results.

\parab{DAG of workers:}
Our paper focuses on the case where there is one master and multiple workers. However, in large scale deployments or complex workloads, query planning may result in a directed acyclic graph (DAG) of workers, each takes several inputs, runs a task, and outputs to a worker on the next level. In such cases, we can run \NAME{} at each edge in which data is sent between workers. To distinguish between edges, each has a dedicated port number and a set of resources (stages, ALUs, etc.) allocated to it. To that end, we use the same packing algorithm described in~\cref{sec:multiqueries}.

\parab{Packing multiple entries per packet:}
\NAME{} spends a significant portion of its query processing time on transmitting the entries from the workers. This is due to two factors; first, it does not run tasks on the workers that filters many of the entries; second, it only packs one entry in each packet. 
While the switch cannot process a very large number of entries per packet on the switch due to limited ALUs, we can still pack several (e.g. four) entries in a packet thereby significantly reducing this delay. P4 switches allow popping header fields~\cite{P4Spec} and thereby support pruning of some of the entries in a packet. The limit on the number of entries in each packet depends on the number of ALUs per stage (all our algorithms use at least one ALU per entry per stage), the number of stages (we can split logical stage to several if the pipeline is long enough).
Our \distinct, \topn, and \groupby algorithms support multiple entries per packet while maintaining correctness: if several entries are mapped to the same matrix row, we can avoid processing them while \mbox{not pruning the entries.}
\section{Related Work}\label{sec:related}\
This work has not been published elsewhere except for a 2-page poster at SIGCOMM~\cite{sigcommPoster}. The poster discusses simple filtering, \distinct, and \topn. This work significantly advances the poster by providing pruning algorithms for $4$ additional queries, an evaluation on two popular benchmarks, and a comparison with NetAccel~\cite{lerner2019case}. This work also discusses probabilistic pruning, optimizing multiple queries, using multiple switches, and a reliability protocol.

\parab{Hardware-based query accelerators:} {\NAME} follows a trend of accelerating database computations by offloading computation to hardware. Industrial systems ~\cite{Netezza, Exadata} offload parts of the computation to the storage engine. Academic works suggest offloading to FPGAs~\cite{Cipherbase,dennl2012fly,sukhwani2012database,woods2014ibex}, SSDs~\cite{do2013query}, and GPUs~\cite{paul2016gpl,Govindaraju:2004:FCD:1007568.1007594,Sun:2003:HAS:872757.872813}. These either consider offloading the entire operation to hardware~\cite{lerner2019case, woods2014ibex}, or doing a per-partition \emph{exact} aggregation/filtering before transmitting the data for aggregation~\cite{do2013query}. However, exact query computation on hardware is challenging and these only support basic operations (e.g., filtering~\cite{Netezza, do2013query}) or primitives  (e.g., partitioning~\cite{FPGAPartitioning} or aggregation~\cite{dennl2013acceleration, woods2014ibex}).

 \NAME uses programmable switches which are either cheaper or have better performance than alternative hardware such as FPGAs. Compared to FPGAs, switches handle two orders of magnitude more throughput per Watt~\cite{Tokusashi:2019:CIC:3302424.3303979} and ten times more throughput per dollar~\cite{switchcost}~\cite{Tokusashi:2019:CIC:3302424.3303979}. GPUs consume 2-3x more energy than FPGAs for equivalent data processing workloads~\cite{Owaida:2019:LLD:3357377.3365457} and double the cost of FPGA with similar characteristics~\cite{Owaida:2019:LLD:3357377.3365457}. A summary of the attributes of the different alternatives appears in Table~\ref{tab:tofino}. Switches are also readily available in the networks at no additional cost. We offload {\em partial} functions on switches using the pruning abstraction and support a variety of database queries.
\begin{table}
    \centering
    \resizebox{1.025\columnwidth}{!}{
    \begin{tabular}{|l|l|l|l|l|l|l|}\hline
      System   & Server & GPU~\cite{gpubw} & FPGA~\cite{netfpga} & SmartNIC~\cite{mellanox} & 
      Tofino V2~\cite{tofinov2}\\\hline
      Throughput & 10-100Gbps & 40-120Gbps & 10-100Gbps & 10-100Gbps & 
      12.8 Tbps\\
      Latency & 10-100 $\mu$s & 8-25 $\mu$s & 10 $\mu$s & 5-10$\mu$s 
      &   $<$ 1$\mu$s \\\hline
    \end{tabular}
    }
    \caption{Performance comparison of hardware choices. 
    }
    \label{tab:tofino}
    \vspace*{-7mm}
\end{table}
Sometimes, FPGAs and GPUs also incur extra data transmission overhead. For example, GPU's separate memory system also introduces significant performance overhead with extra computation and memory demand~\cite{woods2014ibex}. When an FPGA is attached to the PCIe bus (~\cite{Cipherbase, Netezza}), we have to copy the data to and from the FPGA explicitly~\cite{woods2014ibex}. One work has used FPGAs as an in-datapath accelerator to avoid this transfer~\cite{woods2014ibex}.

Moreover, switches can see the aggregated traffic across workers and the master, and thus allow optimizations across data partitions. In contrast, FPGAs are typically connected to individual workers due to bandwidth constraints ~\cite{Netezza} and can only optimize the query for each partition.
That said, \NAME{} complements these works as switches can be used with FPGA and GPU-based solutions for additional performance.

\parab{Offloading to programmable switches:}
Several works use programmable switches for offloading different functionality that was handled in software~\cite{NetCache, DistCache, NetChain, NetPaxos, SilkRoad, Jepsen:2018:LFL:3185467.3185494, Jepsen:2018:PSP:3286062.3286092, Snappy, PRECISION, harrison2018network}. \NAME
offloads database queries, which brings new challenges to fit the constrained switch programming model because database queries often provide a large amount of data and require diverse computations across many entries. One opportunity in databases is that the master can complete the query from the pruned data set.

In the network telemetry context, researchers proposed Sonata, a general monitoring abstraction that allows scripting for analytics and security applications~\cite{sonata}. Sonata supports filtering, map and a constrained version of distinct in the data plane but relies on a software stream processor for other operations (e.g., \groupby, \topn, \skyline). Conceptually, as Sonata offloads only operations that can be fully computed in the data plane, its scope is limited. Sparser~\cite{palkar2018filter} accelerates text-based filtering using SIMD.

Recently, NetAccel~\cite{lerner2019case} suggested using programmable switches for query acceleration. We discuss and evaluate the differences between \NAME{} and NetAccel in~\cref{sec:netaccel}. Jumpgate ~\cite{mustard2019jumpgate} also suggests accelerating Spark using switches. It uses a method similar to NetAccel. However, while \NAME{} and NetAccel are deployed in between the worker and master server, Jumpgate stands between the storage engine and compute nodes. Jumpgate does not include an implementation, is specific to filtering and partial aggregation, and cannot cope with packet loss.
\section{Conclusion}
We present \NAME, a new query processing system that significantly reduces query completion time compared to the current state-of-the-art for a variety of query types. 
\NAME accelerates queries by leveraging programmable switches while using a pruning abstraction to fit in-switch constraints without affecting query results. 
 
\section{Acknowledgements}
We thank the anonymous reviewers for their valuable feedback. We thank Mohammad Alizadeh and Geeticka Chauhan for their help and guidance in the early stages of this project, and Andrew Huang for helping us validate Spark SQL queries. This work is supported by the National Science Foundation under grant CNS-1829349 and the Zuckerman foundation.

\bibliographystyle{abbrv}
\bibliography{bibliography}
\ifdefined\fullversion
\clearpage
\appendix

\section{Algorithm Summary Table}\label{app:algTable}
\subsection{Algorithm Implementation}
\begin{table}[H]
    \centering
    \resizebox{1.025\columnwidth}{!}{
    \begin{tabular}{|l|c|c|p{3.33cm}|
    }\hline
      \textbf{Algorithm} & \textbf{Guarantee} & \textbf{Parameters} & \textbf{Meaning}
      \\\hline\hline
      \distinct{}&Rand&$(w,d)$&A $d\times w$ matrix used as a $w$-way cache 
      
      \\\hline
      \skyline{}&Det&$w$&Number of points stored on the switch
      \\\hline
      \multirow{2}{*}{\topn{}}&Det&$w$&Number of counters stored on the switch
      \\\hhline{~---}
      &Rand&$(w,d)$&A $d\times w$ matrix where each row uses a rolling minimum.
     \\\hline
      \groupby{}&Det&$(w,d)$&$d\times w$ matrix with one hash per row.
      \\\hline
      \join{}&Det&$(M,H)$& $M$ filter bits,\qquad{}\qquad{} $H$ hash functions
      \\\hline
      \having{}&Det&$(w,d)$&Count Min Sketch with $d$ rows and $w$~columns\\
       \hline
    \end{tabular}
    }
    \caption{Summary of our algorithms}
    \label{tab:algSummary}
\end{table}

\subsection{Implementation Resource Estimates}

\subsubsection{Pruning overhead}
All the discussed queries require one extra Match-Action table that checks a single-bit metadata entry, $prune$, and drops the packet if it is true. The total metadata requirements for all queries combined is comparable to that of an IPv4 header and no individual query in our implementation took more than $\sim$ 255 bits of metadata. Some queries have metadata that scales with the SRAM we allocate to it, we will discuss these in the individual resource estimates for those queries.  

\subsubsection{Filtering}

Filtering a single condition requires just 1 ALU. If we want to allow the control plane to reconfigure the filtering constant $c$ in a query such as 
\begin{lstlisting}[backgroundcolor = \color{lightgray!50},style=SQL, breaklines=true, aboveskip=0pt, belowskip=0pt]
SELECT * WHERE x > c
\end{lstlisting}
 we need a register to store the value of $c$. Otherwise, if we do not need to change the constant at runtime, filtering takes no additional SRAM .

\subsubsection{Distinct}
To implement a single LRU cache that can contain $w$ entries and can store upto $64$-bit values, we allocate a $64$-bit register on $w$ stages. To implement $d$ LRU caches (i.e., a multi-row LRU cache), we can increase the number of $64$-bit registers to $d$ together with a hash to select the right register index to map a value to. This results in the total SRAM requirement for a $d$-row $w$-column multi-row LRU cache being $d \times w \times 64b$.
The rolling replacement policy we use to ensure LRU eviction (section ~\ref{sec:querypruningalgorithms}) requires an ALU per stage. Multiple rows do not require additional ALUs as the same ALU can be provided a different register index depending on the row. Therefore, ALUs equal to the number of stages i.e., $w$ are used.

\subsubsection{Group By}
To implement Group By, we need to allocate $d$ (64 bit) registers per stage to store values (where $d$ is the number of $64$-bit register indexes available) along with an ALU per stage to compare with a currently received packet's value. Hence a 1 stage Group By requires at least $d \times 64b$ SRAM and 1 ALU. A $w$ stage Group By requires $w \times d \times 64b$ SRAM and $w$ ALUs.

\subsubsection{Deterministic Top N}
Every cutoff, $t_{i}$ requires a register in a separate stage. We require an ALU at every stage to count the number of packets seen above that entry and confirm we saw at least $N$ before pruning on that stage's $t_{i}$. We need one extra stage and ALU to conduct the $min$ computation required to set $t_{0}$. This results in a total of $w + 1$ ALUs and $w + 1$ stages. Since we use one register per stage and we use $64$ bit registers, the SRAM used is $(w + 1) \times 64b$.

\subsubsection{Randomized Top N}
This query uses a similar hash-based register indexing mechanism as \distinct{} hence we $d$ registers per stage. If the registers are of width $64$ bits and we have $w$ stages, this translates to a total SRAM usage of $d \times w \times 64b$ similar to \distinct{}. To conduct the comparison and replacement operation neccessary for implementing a rolling minimum, we need to use an ALU at each stage, resulting in a total of $w$ ALUs used.
\section{The Queries used for our Benchmarking}\label{app:queries}
The benchmark consists of two tables that we use -- Rankings and UserVisits. 
Ranking consists of 90M \inputRows{} with three \inputCols{}: pageURL, pageRank, avgDuration and is roughly sorted on pageRank. UserVisits has 775M \inputRows{} with nine 
\inputCols{}, including destURL, adRevenue, languageCode, and userAgent. 
We consider the following queries: 

{\scriptsize
\begin{lstlisting}[backgroundcolor = \color{lightgray!50},style=SQL, breaklines=true, aboveskip=0pt, belowskip=0pt]
(1) SELECT COUNT() FROM Rankings WHERE avgDuration < 10
\end{lstlisting}
\begin{lstlisting}[backgroundcolor = \color{lightgray!50},style=SQL, breaklines=true, aboveskip=0pt, belowskip=0pt]
(2) SELECT DISTINCT userAgent FROM UserVisits
\end{lstlisting}
\begin{lstlisting}[backgroundcolor = \color{lightgray!50},style=SQL, breaklines=true, aboveskip=0pt, belowskip=0pt]
(*\footnote{As the data is nearly sorted on pageRank, we run the query on a random permutation of the table.}*)(3) SELECT * FROM Ratings (*\bfseries SKYLINE*) (*\bfseries OF*) pageRank,avgDuration 
\end{lstlisting}
\begin{lstlisting}[backgroundcolor = \color{lightgray!50},style=SQL, breaklines=true, aboveskip=0pt, belowskip=0pt]
(4) SELECT (*\bfseries TOP*) 250 * FROM UserVisits ORDER BY adRevenue
\end{lstlisting}
\begin{lstlisting}[backgroundcolor = \color{lightgray!50},style=SQL, breaklines=true, aboveskip=0pt, belowskip=0pt]
(5) SELECT userAgent, MAX(adRevenue) FROM UserVisits GROUP BY userAgent
\end{lstlisting}
\begin{lstlisting}[backgroundcolor = \color{lightgray!50},style=SQL, breaklines=true, aboveskip=0pt, belowskip=0pt,escapechar=ß]
ß\footnote{As the data have 100\% match between the keys, we took a random 10\% subset of each table for the join.}ß(6) SELECT * FROM UserVisits JOIN Ratings ON UserVisits.destURL = Ratings.pageURL
\end{lstlisting}
\begin{lstlisting}[backgroundcolor = \color{lightgray!50},style=SQL, breaklines=true, aboveskip=0pt, belowskip=0pt]
(7) SELECT languageCode FROM UserVisits GROUP BY languageCode HAVING SUM(adRevenue) > 1000000
\end{lstlisting}
}
\section{Analysis of \NAME's \distinct Algorithm}\label{app:distinct}
Our goal is to satisfy the \emph{probabilistic} accuracy guarantee, which means that with high probability we return the correct answer without pruning any distinct \inputRow{}. To that end, we observe that it is enough to guarantee that on all \matrixRows{}, no two \inputRows{} share the same fingerprint. 

\begin{theorem}
Consider a stream with $m$ entries and our algorithm with $w$ columns and fingerprints of size $f=\ceil{\log\parentheses{w\cdot m / \delta}}$. Then with probability at least $1-\delta$ there are no false positives.
\end{theorem}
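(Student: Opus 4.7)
The plan is to bound the false positive probability by a double union bound, first over the $w$ entries cached in the row an arriving entry maps to, and then over the $m$ entries of the stream. A false positive in the \distinct algorithm occurs exactly when an arriving entry $x$ is compared against some cached entry $y \neq x$ in its row and the two entries share the same fingerprint; if this never happens, the switch only prunes true duplicates and the algorithm outputs the correct set of distinct values.

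First I would fix an arriving entry $x$ at some time step and condition on the row $h(x) \in \{1, \ldots, d\}$ to which $x$ is mapped. In that row there are at most $w$ previously stored entries, each with a fingerprint that, under the standard assumption of a uniformly random hash function, is independent of $x$'s fingerprint and uniform over $\{0,1\}^f$. For any fixed cached entry $y \neq x$, the probability that $\mathrm{fp}(x) = \mathrm{fp}(y)$ is exactly $2^{-f}$. A union bound over the (at most) $w$ cached entries in the row gives that the conditional probability of a false positive caused by $x$ is at most $w \cdot 2^{-f}$, and since this bound does not depend on which row $x$ landed in, it holds unconditionally.

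Next I would take a union bound over the $m$ entries in the stream to conclude that the probability that any false positive ever occurs is at most $m \cdot w \cdot 2^{-f}$. Plugging in $f = \lceil \log(w \cdot m / \delta) \rceil$ yields $2^{-f} \le \delta / (w \cdot m)$, and hence the overall false positive probability is at most $\delta$, as required.

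The only mildly delicate step is justifying that the fingerprints of distinct entries can be treated as independent uniform samples from $\{0,1\}^f$; I would handle this by appealing to the standard assumption (used throughout the paper) that the fingerprint hash is drawn from a family with sufficient independence, or equivalently by modeling fingerprints in the random oracle model. Apart from this modeling choice, the argument is a clean two-level union bound with no additional obstacles.
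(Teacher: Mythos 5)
Your proposal is correct and matches the paper's own argument essentially verbatim: the paper also bounds the per-entry collision probability by a union bound over the at most $w$ cached items in the entry's row, then union-bounds over all $m$ stream entries to obtain $m\cdot w\cdot 2^{-f}\le\delta$. The only difference is that you make the independence/uniformity assumption on the fingerprint hash explicit, which the paper leaves implicit.
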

\begin{proof}
Consider an arbitrary item in the input. If the row that it is mapped to is full, according to the union bound, it has a probability of at most $w\cdot 2^{-f}$ of colliding with one of the stored items (not including itself). Taking the union bound over all the data, we have that the probability that any item had a fingerprint collision is at most \mbox{$m\cdot w\cdot 2^{-f}\le\delta$.}
\end{proof}

Sometimes, the size of the data is too large for the above analysis to provide feasible fingerprint length due to the logarithmic dependency on the number of tuples. Instead, we can analyze the fingerprint size required given that the number of \emph{distinct} elements is small while the input can be arbitrarily large.

Denoting by $L$ the fingerprint length and by $X_i$ the number of \emph{distinct} entries mapped into \matrixRow{} $i$, we have that the probability of fingerprint collision in it is bounded by $2^{-L}\cdot {X_i \choose 2}\le 2^{-L-1}\cdot  X_i^2$. Using the union bound, we have that the chance that in any \matrixRow{} there is such a collision is at most $2^{-L-1}\cdot \sum_{i=1}^d X_i^2$. Therefore, if we set the fingerprints to be of size $\logp{\frac{\sum_{i=1}^d X_i^2}{2\delta}}$ we will get that with probability $1-\delta$ the algorithm is correct. However, $\sum_{i=1}^d X_i^2$ is a random variable and not a fixed quantity. Our analysis discusses how to upper-bound $\sum_{i=1}^d X_i^2$, with probability $1-\delta/2$, in a Balls-and-Bins experiment in which $D$ balls (distinct \inputRows{}) are thrown randomly into $d$ bins (\matrixRows{}). Then, we require that with probability $1-\delta/2$ there will be no same-\matrixRow{} fingerprint collisions \emph{given our bound on $\sum_{i=1}^d X_i^2$}. Using the union bound we conclude that the algorithm will succeed with probability of at least $1-\delta$. Unlike that $\approx 2\log D$-sized fingerprints that would be required if one requires all fingerprints to be distinct, we show that $(1+o(1)) \log (D^2/d)$ bits are enough (i.e., we save $\approx \log d$ \mbox{bits on each fingerprint).}

For example, if $d=1000$ and $\delta=0.01\%$, we can support up to $500M$ distinct elements using $64$-bits fingerprints \emph{regardless of the data size}. Further, this does not depend on the value of $w$.

\begin{theorem}\label{thm:distinct}
Denote$$
\mathcal M\triangleq \begin{cases}
e\cdot D/d &\mbox{if $D>d\ln (2d/\delta)$}\\
e\cdot \ln (2d/\delta) &\mbox{if $d\cdot\ln\delta^{-1}/e \le D\le d\ln (2d/\delta)$}\\
\frac{1.3\lnp{2d/\delta}}{\lnp{\frac{d}{D\cdot e}\lnp{2d/\delta}}}&\mbox{otherwise}
\end{cases} ,
$$ where $D$ is the number of distinct items in the input.
Consider storing fingerprints of size $f=\ceil{\log(d\cdot \mathcal M^2/\delta)}$
bits. Then with probability $1-\delta$ there are no false positives and the distinct operation terminates successfully.
\end{theorem}


\begin{theorem}\label{thm:distinct}
Denote$$
\mathcal M\triangleq \begin{cases}
e\cdot D/d &\mbox{if $D>d\ln (2d/\delta)$}\\
e\cdot \ln (2d/\delta) &\mbox{if $d\cdot\ln\delta^{-1}/e \le D\le d\ln (2d/\delta)$}\\
\frac{1.3\lnp{2d/\delta}}{\lnp{\frac{d}{D\cdot e}\lnp{2d/\delta}}}&\mbox{otherwise}
\end{cases} ,
$$ where $D$ is the number of distinct items in the input.
Consider storing fingerprints of size $f=\ceil{\log(d\cdot \mathcal M^2/\delta)}$
bits. Then with probability $1-\delta$ there are no false positives and the distinct operation terminates successfully.
\end{theorem}
\begin{proof}
Let $D_i$ denote the number of distinct items that are mapped into row $i$. Given two distinct items $x,y$, the probability that they share the same fingerprint is $2^{-f}$. Therefore, the probability of collisions between any two distinct items mapped into row $i$ is bounded by ${D_i \choose 2}2^{-f}\le D_i^2\cdot2^{-f-1}$.
Using the union bound, we have that the probability of any collision is at most $\sum_{i=1}^d D_i^2\cdot2^{-f-1}\le d\cdot2^{-f-1}\cdot \parentheses{\max\set{D_i}}^2\le \parentheses{\max\set{D_i}/\mathcal M}^2\cdot \delta/2$. Next, we can show that with probability $1-\delta/2$ the maximum load on each row is bounded as $\max\set{D_i}\le \mathcal M$. Therefore, we use the union bound once again to conclude that the probability of any within-row collision is at most $\delta$.
\begin{lemma}
$$
\Pr\brackets{\max\set{D_i} > \mathcal M}\le \delta/2.
$$
\end{lemma}
The proof of the light-load case (where $D<d\ln (2d/\delta)$) is similar to our analysis in Theorem~\ref{thm:topn} and we avoid repeating it for brevity. 
We use the following 
In the proof we use the following version of the Chernoff bound~\cite{Mitzenmacher:2005:PCR:1076315}.
\begin{lemma} (Chernoff Bound)
Let $X\sim\mathit{Bin}(n,p)$ be a binomial random variable with mean $np$, then for any $\gamma>0$:
    $$
        \Pr[X > np(1+\gamma)]\le \parentheses{\frac{e^\gamma}{(1+\gamma)^{1+\gamma}}}^{np}.
    $$
\end{lemma}
We set $\gamma=e-1, n=D, p=1/d$ to get that the probability that any of the rows see more than $\mathcal M$ distinct values is
$$
        d\cdot \parentheses{\frac{e^\gamma}{(1+\gamma)^{1+\gamma}}}^{D/d}=d\cdot e^{-D/d}\le \delta/2,
    $$
    where the last inequality follows from $D\ge d\ln (2d/\delta)$.
    We conclude a bound of $\mathcal M = D/d\cdot(1+\gamma) = e\cdot D/d$ as required.\qedhere
\end{proof}

We proceed with analysis of the pruning ratio on random order streams. Intuitively, if row $i$ sees $D_i$ distinct values and each are compared with $w$ that are stored in the switch memory, then with probability at least $w/D_i$ we will prune every duplicate entry. 
We use our maximum-load bound $\mathcal M$ for upper bounding the number of distinct items that are mapped to a given matrix row.
For example, consider a stream that contains $D=15000$ distinct entries and we have $d=1000$ rows and $w=24$ columns. Then we are expected to prune 58\% of the \emph{duplicated entries} (i.e., the entries that have appeared previously).
\begin{theorem}
Consider a random order stream with $D > d\ln(200d)$ distinct entries\footnote{It's possible to optimize other cases as well, but this seems to be the common case.}. Our algorithm, configured with $d$ rows and $w$ columns is expected to prune at least $ 0.99\cdot \min\set{\frac{w \cdot d}{D\cdot e},1}$ fraction of the duplicated entries.
\end{theorem}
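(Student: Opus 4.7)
The plan is to combine the max-load bound used in the proof of Theorem~\ref{thm:distinct} with a random-order symmetry argument on the contents of the LRU cache. First, I would invoke the Balls-and-Bins/Chernoff analysis from the earlier lemma with failure probability $\delta = 1/100$: since the hypothesis $D > d\ln(200d) = d\ln(2d/\delta)$ falls in the ``heavy-load'' regime, with probability at least $0.99$ every row $i$ receives at most $D_i \le \mathcal{M} = eD/d$ distinct values. Call this the \emph{good event}; on its complement I simply lower-bound the pruning fraction by $0$.

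Conditioning on the good event, I would fix an arbitrary duplicate arrival mapped to row $i$ and show that the probability it is pruned is at least $\min(w/D_i,\,1)$. The key structural observation is that the cache stores the $w$ most recently inserted \emph{distinct} values in row $i$'s substream (a duplicate that would have been pruned does not displace anything under the rolling LRU policy described in~\cref{sec:partitioning}). Hence the duplicate of value $v$ is pruned iff $v$ lies among the last $w$ distinct values observed in row $i$ prior to its arrival. Under the random-order assumption the permutation of first-appearance positions of the $D_i$ distinct values in row $i$ is uniform, so by symmetry $v$ appears in that ``last-$w$'' window with probability $\min(w/D_i,\,1)$. A short case analysis on whether the current duplicate is or is not the last occurrence of $v$ in row $i$ shows that the pruning probability is at least this quantity.

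Finally, since $D_i \le eD/d$ on the good event, the per-duplicate pruning probability is at least $\min(w/D_i,\,1) \ge \min(wd/(eD),\,1)$. Summing over duplicates by linearity of expectation, the conditional expected pruning fraction is at least $\min(wd/(eD),\,1)$, and multiplying by $\Pr[\text{good event}] \ge 0.99$ yields the claimed bound.

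The main obstacle is the symmetry step: rigorously justifying that under the random-order distribution the ``most recent $w$ distinct values in row $i$'' are a uniformly distributed size-$w$ subset (or prefix of a random permutation) of the $D_i$ distinct values mapped into that row. One has to argue this without circular conditioning — the arrival time of the duplicate itself depends on the interleaving of occurrences, so one needs the right exchangeability claim about the substream at row $i$ given the multiset of items routed there. I would handle this by appealing to the fact that, conditional on the multiset partition induced by the hash, the restricted order in each row is uniform, and then using the standard bijection between uniform permutations and their suffix/prefix statistics to extract the $\min(w/D_i,\,1)$ probability.
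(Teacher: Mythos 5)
Your proposal follows essentially the same route as the paper's proof: condition on the $99\%$-probability maximum-load event $\max_i\set{D_i}\le e\cdot D/d$ inherited from the fingerprint analysis (with $\delta=1/100$, so that $d\ln(2d/\delta)=d\ln(200d)$), lower-bound the per-duplicate pruning probability by $\min\set{w/D_i,1}$, and combine via linearity of expectation. In fact the paper's proof simply asserts the $\min\set{w/D_i,1}$ step, so the random-order exchangeability argument you flag as the main obstacle is more justification than the paper itself supplies.
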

\begin{proof}
We use the bound on the maximum number of distinct elements that are mapped into a row presented in Theorem~\ref{thm:distinct}. Specifically, we had that since $D > d\ln(200d)$ then with probability $99$\% we have that no row processes more than $e\cdot D/d$. That is, if we denote by $I$ this event, we have $\Pr[I]\ge0.99$. \emph{Conditioned on $I$}, we have that each redundant row is pruned with a probability of at least $\min\set{w/\max\set{D_i},1}\ge \min\set{\frac{w \cdot d}{D\cdot e},1}$. Therefore, we conclude that in expectation we prune at least $\Pr[I]\cdot \min\set{\frac{w \cdot d}{D\cdot e},1} = 0.99\cdot \min\set{\frac{w \cdot d}{D\cdot e},1}$.
\end{proof}


\section{Implementing Approximate Product in the Datapath}\label{app:skyline}
As discussed above, computing the product of the dimensions ($h_P(x)=\prod_{i=1}^D x_i$) in a \skyline query may be infeasible on the switch's datapath.
Instead, we propose to use the \emph{Approximate Product heuristic} (APH). 
Towards constructing APH, we first observe that for any $\beta>0$, $h_P(x)>h_P(y) \iff \sum_{i=1}^D\beta\log x_i > \sum_{i=1}^D\beta\log y_i$ by the monotonicity of the logarithm function. That is, $x$ dominates $y$ according to the product heuristic if and only if $\sum_{i=1}^D\beta\log x_i > \sum_{i=1}^D\beta\log y_i$. Unfortunately, the switch cannot compute logarithms either. Instead, we make a clever use of the switch's TCAM for computing the most significant $1$-bit in each dimension and approximate the logarithm in a fixed point representation with $\beta$ bits for the fractional part. The approximate logarithm values are then summed on the switch which proceeds as in the sum heuristic.



We utilize the match-action tables to approximate the log values. Specifically, we construct a static table with $2^{16}$ entries that maps each integer $a\in\set{1,2,\ldots,2^{16}}$ into its approximate logarithm form $f(a) = \brackets{\beta\log a}$. Here, $\beta>0$ is a parameter that determine an accuracy to representation-size tradeoff and $\brackets{\cdot}$ is the integer rounding operator. For example, if $\beta=2^{28}$, then the maximal image value is $\brackets{\beta\log (2^{16}-1)}<2^{32}-1$ and can thus be efficiently encoded using just 32-bits. 
APH is then defined as $\widehat{H_P(x)}\triangleq\sum_{i=1}^d f(x_i)$.
In case the data contains \inputCols{} wider than 16-bits (e.g., $32$ or $64$ bit integers) we can still use just $2^{16}$-sized match-action table by applying it on the 16-bits starting with the 1 in the representation. For example, for $z\in\set{0,\ldots,2^{32}-1}$ let $\ell$ denote the index of the most significant set bit in the binary representation of $z$ (i.e., $\ell=\floor{\log_2 z}$). 
If $\ell\le 16$ we can apply the table on the $16$ least significant bits of $z$. Otherwise, we apply it on $z_{\ell,\ldots,\ell-16}$. Specifically, if we denote by $z'$ these bits we have that $z\approx z'\cdot 2^{32-\ell}$ and thus $\log(z)\approx \log(z')+(32-\ell)$.
To find the value of $\ell$ we use the switch TCAM which using $32$ or $64$ rules can compute $\ell$ with a single \mbox{lookup for $32$ or $64$-bit integers respectively.}
\section{Analysis of the \topn Algorithm}\label{app:topn}
The goal of our algorithm is to ensure that with probability $1-\delta$, where $\delta$ is an error parameter set by the user, no more than $w$ \topn values are mapped into the same \matrixRow{}. In turn, this guarantees that the pruning operation is successful and that all \outputRows{} are not pruned. In the following, we assume that $d$ is given (this can be derived from the amount of per-stage memory available on the switch) and discuss how to set the number of \matrixCols{}.
To that end, we use $w\triangleq\floor{\frac{1.3\lnp{d/\delta}}{\lnp{\frac{d}{N\cdot e}\lnp{d/\delta}}}}$ \matrixCols{}. For example, if we wish to find the top-1000 with probability 99.99\% (and thus, $\delta=0.0001$) and have $d=600$ \matrixRows{} then we use $w=16$ \matrixCols{}. Having more space (larger $d$) reduces $w$; e.g., with $d=8000$ \matrixRows{} we require $5$ \matrixCols{}. Having too few \matrixRows{} may require an excessive number of \matrixCols{} (e.g., $w=288$ \matrixCols{} are required for $d=200$) which may be infeasible due to the limited number of pipeline stages. 
\begin{theorem}\label{thm:topn}
Let $d,N\in\mathbb N, \delta > 0$ such that $d\ge N\cdot e / \ln\delta^{-1}$ and define $w\triangleq\floor{\frac{1.3\lnp{d/\delta}}{\lnp{\frac{d}{N\cdot e}\lnp{d/\delta}}}}$. Then \topn query succeeds with probability at least $1-\delta$.
\end{theorem}
\begin{proof}
We use the following fact:
\begin{fact}\label{fact}
Let $x,y\in\mathbb R$ such that $y>e$ and $x\ge 1.3\ln y/\ln \ln y$ then $x^x\ge y$.
\end{fact}

Denote $x\triangleq\parentheses{\frac{(w+1)\cdot d}{N\cdot e}}$ and $y\triangleq  \parentheses{d/\delta}^{\frac{d}{N\cdot e}}$. 
%
The number of \topn elements that are mapped into \matrixRow{} $i$ is a binomial random variable $X_i\sim Bin(N,d)$. We wish to show that the probability that there exists a \matrixRow{} for which $X_i> w$ is at most $\delta$.
By using the union bound, we get that it is enough to show
$\Pr\brackets{X_i> w}\le \delta/d$.
We have that $$\Pr\brackets{X_i> w}\le{N \choose w+1}\cdot d^{-(w+1)}\le \parentheses{\frac{N\cdot e}{(w+1)\cdot d}}^{w+1}.$$
Our goal is to show that $\parentheses{\frac{N\cdot e}{(w+1)\cdot d}}^{w+1}\le \delta/d$, which is equivalent to showing $\parentheses{\frac{(w+1)\cdot d}{N\cdot e}}^{\frac{(w+1)\cdot d}{N\cdot e}}=x^x\ge \parentheses{\frac{d}{\delta}}^{\frac{d}{N\cdot e}}=y$.

Observe that $1.3\ln y/\ln \ln y = \frac{\frac{1.3d}{N\cdot e}\lnp{d/\delta}}{\lnp{\frac{d}{N\cdot e}\lnp{d/\delta}}}
= \frac{x}{w+1}\cdot \frac{1.3\lnp{d/\delta}}{\lnp{\frac{d}{N\cdot e}\lnp{d/\delta}}}
$ and thus $x\ge 1.3\ln y/\ln \ln y$ is equivalent to $w\ge\floor{\frac{1.3\lnp{ d/\delta}}{\lnp{\frac{d}{N\cdot e}\lnp{d/\delta}}}}$ which holds by the value we set to $w$. Finally, Fact~\ref{fact} shows that $x^x>y$ as required and concludes the proof.
\end{proof}
The above theorem shows how to configure the algorithm to ensure the correctness of the operation. 
However, correctness alone is achievable by not pruning any \inputRow{}. Therefore, it is important to asses the fraction of data that our algorithms prune.

In the worst case, if the switch has no feedback from the software stream processor, no pruning is possible. That is, if the input stream is monotonically increasing, the switch must pass all \inputRows{} to ensure correctness. 
In practice, streams are unlikely to be adversarial as the order in which they are stored is optimized for performance. 
To that end, we analyze the performance on random streams, or equivalently, arbitrary streams that arrive in a random order.
Going back to the above example, if we have $d=600$ \matrixRows{} on the switch and aim to find TOP $1000$ from a stream of $m=8M$ elements, our algorithm is expected to prune at least 99\% of the data. For a larger table of $m=100M$ entries our bound implies expected pruning of over 99.9\% 
Observe that the logarithmic dependency on $m$ in the following theorem implies that our algorithm work better for larger datasets. 
\begin{theorem}\label{sec:pruning}
Consider a random-order stream of $m$ elements and consider the \topn operation with algorithm parameters $d,w$ as discussed above. Then our algorithm prunes at least all but $w\cdot d\cdot{\lnp{\frac{m\cdot e}{w\cdot d}}}$ of the $m$ elements in expectation.
\end{theorem}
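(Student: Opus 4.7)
The plan is to upper bound the expected number of forwarded (non-pruned) \inputRows{} by $w\cdot d\cdot\lnp{m\cdot e/(w\cdot d)}$; the claimed lower bound on the number pruned then follows by complementation. Let $m_j$ denote the (random) number of \inputRows{} hashed to \matrixRow{} $j$, so that $\sum_{j=1}^{d} m_j = m$ and $\mathbb E[m_j] = m/d$. Since the overall stream is in uniformly random order and the hash assignment is independent of the values, the substream of \inputRows{} landing in \matrixRow{} $j$ is, conditional on $m_j$, itself in uniformly random order.

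The core step is a per-row records-in-a-prefix argument. Inside \matrixRow{} $j$ the rolling minimum caches exactly the top $w$ \inputRows{} seen so far, so the $k$-th \inputRow{} of the row's substream is forwarded if and only if it ranks among the top $\min(w,k)$ of the first $k$. By exchangeability of a uniformly random permutation this occurs with probability $\min(w,k)/k$. Writing $F_j$ for the number forwarded in \matrixRow{} $j$ and using linearity of expectation,
$$\mathbb E[F_j\mid m_j] \;=\; \sum_{k=1}^{m_j}\frac{\min(w,k)}{k} \;=\; w + w\sum_{k=w+1}^{m_j}\frac{1}{k} \;\le\; w + w\int_{w}^{m_j}\frac{dx}{x} \;=\; w\cdot\lnp{\frac{m_j\cdot e}{w}}.$$

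Aggregating over \matrixRows{} is then routine. Because $x\mapsto \lnp{x\cdot e/w}$ is concave, Jensen's inequality applied to the non-negative random variable $m_j$ with mean $m/d$ yields $\mathbb E\brackets{\lnp{m_j\cdot e/w}}\le \lnp{m\cdot e/(w\cdot d)}$. Summing over $j=1,\ldots,d$,
$$\mathbb E\brackets{\sum_{j=1}^d F_j} \;\le\; w\cdot\sum_{j=1}^d\mathbb E\brackets{\lnp{m_j\cdot e/w}} \;\le\; w\cdot d\cdot \lnp{\frac{m\cdot e}{w\cdot d}},$$
and subtracting from $m$ gives the stated expected number of pruned \inputRows{}.

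The principal obstacle is identifying the per-row forwarded count as $\sum_{k=1}^{m_j}\min(w,k)/k$; once this is in hand the rest is a one-line harmonic-to-integral estimate together with a single application of Jensen. A minor care point is verifying that the per-row substream is uniformly ordered conditional on $m_j$, which follows because the hash choices are independent of both the \inputRow{} values and the global stream order, so no correlation sneaks in between the load $m_j$ and the record statistics.
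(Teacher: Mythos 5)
Your proposal is correct and follows essentially the same route as the paper's proof: the per-\matrixRow{} observation that the $k$-th element of a row's substream is forwarded with probability $\min(w,k)/k$, the harmonic-sum bound $w(1+\ln(m_j/w))$, and an aggregation via concavity of the logarithm. The only cosmetic difference is that you apply Jensen's inequality to the random load $m_j$ with mean $m/d$, whereas the paper applies concavity pointwise to the loads under the constraint $\sum_i M_i = m$; both yield the identical bound $w\cdot d\cdot\lnp{\frac{m\cdot e}{w\cdot d}}$.
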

\begin{proof}
Denote by $M_i$ a random variable denoting the number of elements mapped into \matrixRow{} $i\in\set{1,2,\ldots,d}$. 
By the operation of the algorithm, the first $\min\set{M_i,w}$ elements are surely not pruned. Each following value is pruned unless it is one of the $w$-largest seen so far. Thus, the probability that the $j$'th element is not pruned, for $j\in\set{1,\ldots,M_i}$ is $\min\set{\frac{w}{j},1}$. Therefore, the \emph{expected} number of elements that are mapped to \matrixRow{} $i$ and are not pruned is $\sum_{j=1}^{M_i}\min\set{\frac{w}{j},1} = w +  \sum_{j=w+1}^{M_i}\frac{w}{j} = w\cdotpa{1+H_{M_i}-H_w} \le w\cdotpa{1+\lnp{M_i/w}}$, where $H_z$ is the $z$ harmonic number $H_z=\sum_{q=1}^z 1/q$.

Next, we take into account all $d$ \matrixRows{}. By linearity of expectation we have that the overall expected number of elements that are not pruned is bounded by $\mathbb E\brackets{\sum_{i=1}^d w\cdotpa{1+\lnp{M_i/w}}}$. Since $\sum_{i=1}^d M_i = m$, and using the concaveness of the logarithm function and Jensen's inequality, we conclude a bound on the number of non pruned elements of $w\cdot d\cdotpa{1+\lnp{\frac{m}{w\cdot d}}}=w\cdot d\cdot{\lnp{\frac{m\cdot e}{w\cdot d}}}$.
\end{proof}

\paragraph{Optimizing the Space and Pruning Rate}
The above analysis considers the number of \matrixRows{} $d$ as given and computes the optimal value for the number of \matrixCols{} $w$. 
However, unless one wishes to use the minimal number of \matrixCols{} possible for a given per-stage space constraint, we can simultaneously optimize the space and pruning rate.
To that end, observe that the required space for the algorithm is $\Theta(w\cdot d)$, while the pruning rate is monotonically increasing in $w\cdot d$ as shown in Theorem~\ref{sec:pruning}.
Therefore, by minimizing the product $w\cdot d$ we optimize the algorithm in both aspects.
Next, we note that for a fixed error probability $\delta$ the value for $w$ is monotonically decreasing in $d$ as shown in Theorem~\ref{thm:topn}. Therefore we define $f(d)\triangleq w\cdot d \approx { \frac{d\cdot1.3\lnp{d/\delta}}{\lnp{\frac{d}{N\cdot e}\lnp{d/\delta}}}}$ and minimize it over the possible values of $d$.\footnote{This omits the flooring of $w$ as otherwise the function is not continuous. The actual optimum, which needs to be integral, will be either the minimum $d$ for that value or for $w$ that is off by $1$.}
The solution for this optimization is setting $d\triangleq {\delta}\cdot e^{W(N\cdot e^2/\delta)}$, where $W(\cdot)$ is the Lambert $W$ function defined as the inverse of $g(z)=ze^z$. For example, for finding TOP 1000 with probability 99.99\% we should use $d=481$ \matrixRows{} and $w=19$ \matrixCols{}, even if the per-stage space allows larger~$d$.

\section{Extended Evaluation}

\subsection{Comparison with NetAccel}

We now compare {\NAME} with NetAccel~\cite{lerner2019case}, which also offloads  Join and GroupBy queries to switches. There are two key differences between \NAME and NetAccel: First, 
NetAccel stores query results on the switch and then sends it to the master when the query processing is complete. In contrast, \NAME prunes entries that are not part of the results and sends all other entries to the master in a streaming manner.

Second, when the switch does not support certain queries, NetAccel offloads the remaining processing to the switch CPU, while \NAME sends unpruned entries to the master.

Since NetAccel is a work-in-progress and does not have a full implementation for fair comparison, we run a microbenchmark to understand the impact of the two differences. 


\begin{figure}
    \centering
    \includegraphics[width=\linewidth]{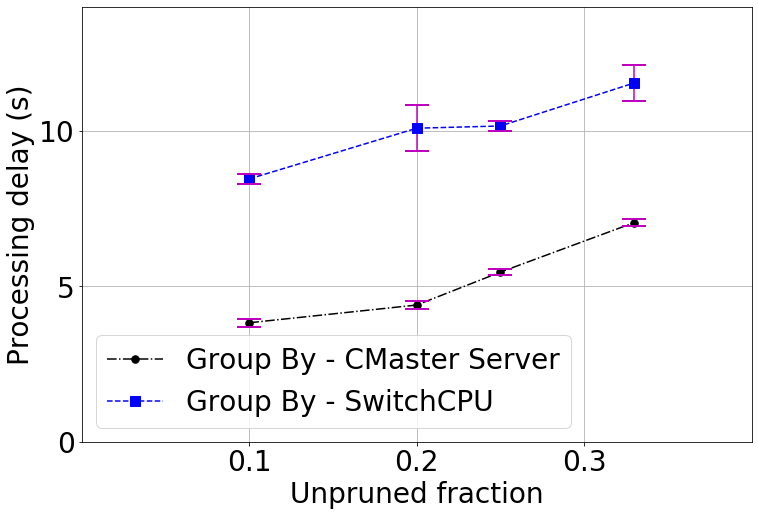}
    \caption{The time of processing the Group-By query on a server compared to processing it on the switch CPU.}
    \label{fig:group_by_switch_cpu}
\end{figure}

\begin{figure}
    \centering
    \includegraphics[width=\linewidth]{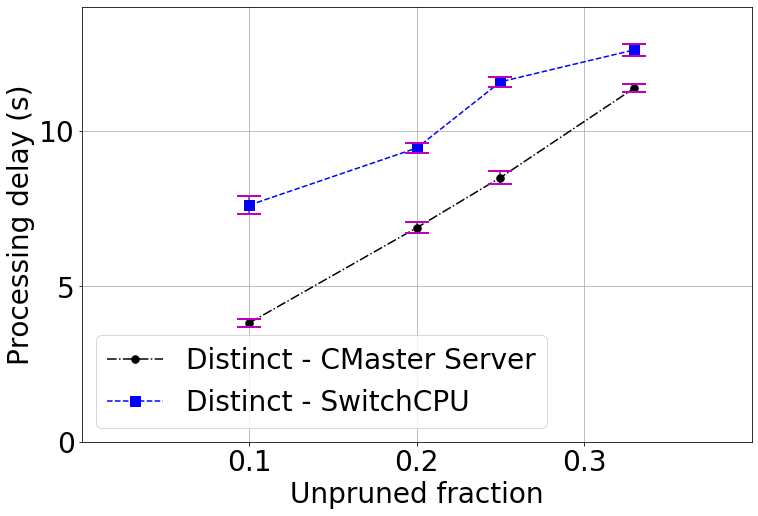}
    \caption{The time of processing the Distinct query on a server compared to processing it on the switch CPU.}
    \label{fig:distinct_switch_cpu}
\end{figure}

\begin{itemize}
    \item Pruning is a better design than storing result on the switch because:
    \begin{itemize}
        \item Storing the result in the switch instead of pruning makes it difficult to support a large number of queries hence NetAccel only supports \textsc{Join} and \textsc{Group-By}.
       \item Storing the result in the switch means NetAccel spends an extra pass at the end transferring the result to the master server. This increases the latency of the query because 1) extra time is spent in data movement of the result and 2) the next step in the query (e.g late materialization) can only be started at the end when it gets the result from the switch instead of pipelined.
    \end{itemize}
    \item Offloading remaining processing to the master server instead of offloading it to the switch CPU is a better idea because:
    \begin{itemize}
        \item The throughput between the data plane and the control plane on the switch is limited adding to overall query latency. 
        \item The switch CPU is not as powerful as a master server making it not scale well as the portion of data not processed in the dataplane increases (figure ~\ref{fig:group_by_switch_cpu} and figure ~\ref{fig:distinct_switch_cpu}).
    \end{itemize}
\end{itemize}

\fi

\end{document}